\documentclass[a4paper,UKenglish,cleveref,autoref,thm-restate]{lipics-v2021}

\usepackage{mathtools}
\usepackage{bm}
\usepackage{ascmac}
\usepackage{algorithm}
\usepackage{algpseudocodex}

\bibliographystyle{plainurl}

\title{Theoretical Aspects of Generating Instances with Unique Solutions: Pre-assignment Models for Unique Vertex Cover} 

\titlerunning{Pre-assignment Models for
Unique Vertex Cover} 

\author{Takashi Horiyama}{Faculty of Information Science and Technology, Hokkaido University, Japan}{horiyama@ist.hokudai.ac.jp}{https://orcid.org/0000-0001-9451-259X}{JSPS KAKENHI Grant Numbers JP20H05964, JP22H03549}

\author{Yasuaki Kobayashi}{Faculty of Information Science and Technology, Hokkaido University, Japan}{koba@ist.hokudai.ac.jp}{https://orcid.org/0000-0003-3244-6915}{JSPS KAKENHI Grant Numbers JP20H00595, JP23H03344}
\author{Hirotaka Ono
}{Graduate School of Informatics, Nagoya University, Japan 
}{ono@nagoya-u.jp}{https://orcid.org/0000-0003-0845-3947}{JSPS KAKENHI Grant Numbers JP20H05967, JP21K19765, JP22H00513}

\author{Kazuhisa Seto}{Faculty of Information Science and Technology, Hokkaido University, Japan}{seto@ist.hokudai.ac.jp}{}{JSPS KAKENHI Grant Number JP22H03549, JP22H00513}

\author{Ryu Suzuki}{Graduate School of Information Science and Technology, Hokkaido University, Japan }{suzuki-r1991@eis.hokudai.ac.jp}{}{}

\authorrunning{T. Horiyama, et al.} 

\Copyright{Takashi Horiyama, Yasuaki Kobayashi, Hirotaka Ono, Kazuhisa Seto, and Ryu Suzuki} 

\ccsdesc[100]{Mathematics of computing $\rightarrow$ Discrete mathematics $\rightarrow$ Graph theory $\rightarrow$ Graph algorithms} 

\keywords{Vertex cover, uniqueness, computational complexity, exact algorithms} 

\category{} 

\relatedversion{} 




\nolinenumbers 


\EventEditors{John Q. Open and Joan R. Access}
\EventNoEds{2}
\EventLongTitle{42nd Conference on Very Important Topics (CVIT 2016)}
\EventShortTitle{CVIT 2016}
\EventAcronym{CVIT}
\EventYear{2016}
\EventDate{December 24--27, 2016}
\EventLocation{Little Whinging, United Kingdom}
\EventLogo{}
\SeriesVolume{42}
\ArticleNo{23}

\usepackage{xcolor}
\usepackage{xspace}

\newcommand{\pauvc}{{\textsc PAU-VC}\xspace}

\newcommand{\cost}{\mathrm{cost}}
\newcommand{\SC}{\mathbf{SC}}

\newcommand{\bv}{\bm{v}}

\floatname{algorithm}{Procedure}

\def\multiset#1#2{\ensuremath{\left(\kern-.3em\left(\genfrac{}{}{0pt}{}{#1}{#2}\right)\kern-.3em\right)}}

\begin{document}

\maketitle

\begin{abstract}
The uniqueness of an optimal solution to a combinatorial optimization problem attracts many fields of researchers' attention because it has a wide range of applications, it is related to important classes in computational complexity, and an instance with only one solution is often critical for algorithm designs in theory. However, as the authors know, 
there is no major benchmark set consisting of only instances with unique solutions, and no algorithm generating instances with unique solutions is known; 
a systematic approach to getting a problem instance guaranteed having a unique solution would be helpful. A possible approach is as follows: Given a problem instance, we specify a small part of a solution in advance so that only one optimal solution meets the specification. This paper formulates such a ``pre-assignment'' approach for the
vertex cover problem as a typical combinatorial optimization
problem and discusses its computational complexity.
First, we show that the problem is $\Sigma^P_2$-complete in general, while 
the problem becomes NP-complete when an input graph is bipartite. 
We then present an $O(2.1996^n)$-time algorithm for general graphs and an $O(1.9181^n)$-time algorithm for bipartite graphs, where $n$ is the number of vertices. The latter is based on an FPT algorithm with $O^*(3.6791^{\tau})$ time for vertex cover number $\tau$. Furthermore, we show that the problem for trees can be solved in $O(1.4143^n)$ time.
\end{abstract}

\section{Introduction}
\label{sec:intro}
Preparing a good benchmark set is indispensable for evaluating the actual performance of problem solvers, such as SAT solvers, combinatorial optimization solvers, and learning algorithms. 
This is the reason that instance generation is a classical topic in the fields of AI and OR, including optimization and learning. 
Indeed, there are well-known benchmark sets, such as the TSPLIB benchmark set for the traveling salesperson problem~\cite{reinelt1991tsplib}, the UCI Machine Learning Repository dataset for machine learning~\cite{asuncion2007uci}, SATLIB for SAT~\cite{hoos2000satlib}, and other benchmark sets for various graph optimization problems in the DIMACS benchmarks~\cite{DIMACS}. 
In these benchmarks, instances are generated from actual data or artificially generated. 
Instances generated from actual data are closely related to the application, so it is a very instance that should be solved in practice and is therefore suitable as a benchmark. On the other hand, it is not easy to generate sufficient instances because each one is hand-made, and there are confidentiality issues.
Random generation is often used to compensate for this problem and has also been studied for a long time, especially how to generate good instances from several points of view, such ``difficulty'' control and variety. 

In this paper, we consider generating problem instances with unique solutions. It is pointed out that the ``unique solution'' is related to the nature of intractability in SAT. Indeed, PPSZ, currently the fastest SAT algorithm in theory~\cite{scheder2022ppsz}, is based on the unique k-SAT solver. Furthermore, a type of problem that determines whether or not it has a unique solution (e.g., \textsc{Unique SAT}) belongs to a class above NP and coNP~\cite{BlassEt:USAT}, but the status of that class is still not well understood, where an exception is TSP; \textsc{Unique TSP} is known to be $\Delta^p_2$-complete~\cite{Papadimitriou:UOTSP}. 
Thus, it is desirable to have a sufficient number of instances with unique solutions in many cases, such as when problem instances with unique solutions are used as benchmarks or when we evaluate the performance of a solver for the uniqueness check version of combinatorial optimization problems. In other words, instances with unique solutions have an important role in experimental studies of computational complexity theory. 

In this paper, we focus on \textsc{Vertex Cover} problem as a representative of combinatorial optimization problems (also referred Drosophila in the field of parameterized algorithms~\cite{downey2013fundamentals}), propose a ``pre-assignment'' model as an instance generation model for this problem, and consider the computational complexity of instance generation based on this model. One of the crucial points in generating problem instances with unique solutions is that it is difficult for unweighted combinatorial optimization problems to apply approaches used in conventional generation methods. A simple and conventional way of generating instances is a random generation, but the probability that a randomly generated problem instance has a unique solution is low. Furthermore, although some combinatorial problems have randomized algorithms modifying an instance into an instance with a unique solution, they strongly depend on the structure of the problems, and it is non-trivial to apply them to other problems, including \textsc{Vertex Cover}~\cite{valiantV1986NP,MulmuleyVV1987isolation}. 
Besides, the planted model used in SAT instance generation requires guaranteeing the non-existence of solutions other than the assumed solution, which is also a non-trivial task, or some planted instances are shown to be easy to solve~\cite{KrivelevichV06}. Thus, generating instances with unique solutions requires a different approach than conventional instance generation methods. 

A pre-assignment for \emph{uniquification} in a problem means assigning an arbitrary part of the decision variables in the problem so that only one solution is consistent with the assignment. In SAT, for example, we select some of the Boolean variables and assign true or false to each so that only one truth assignment consistent with the pre-assignment satisfies the formula. In the case of \textsc{Vertex Cover}, we 
select some vertices and assign a role to each, i.e.,  a cover vertex or a non-cover vertex, so that only one minimum vertex cover of the graph is consistent with the pre-assignment.  
Such a pre-assignment for uniquification enables the generation of problem instances with unique solutions to combinatorial optimization problems. In the case of \textsc{Vertex Cover}, the graph obtained by the pre-assignment (see \Cref{obs:reduction} for the detail) has only one vertex cover with the minimum size, which can be used as an instance of \textsc{Unique Vertex Cover} problem. 
A strong point of the pre-assignment-based instance generation is that it transforms an instance of an ordinary problem into an instance of the uniqueness check version of the corresponding problem; if the ordinary version of the problem has a rich benchmark instance set, we can expect to obtain the same amount of instance set by applying the pre-assignment models. Indeed, many papers study instance generations of the ordinary combinatorial optimization problem as seen later; they can be translated into the results of instances with unique solutions. 

From this perspective, we formulate the pre-assignment problem to \emph{uniquify} an optimal solution for the vertex cover problem instance. We name the problem \pauvc, 
whose formal definition is given in the Models section. 
We consider three types of scenarios of pre-assignment: \textsc{Include}, \textsc{Exclude}, and \textsc{Mixed}. 
Under these models, this study aims to determine the computational complexity of the pre-assignment for uniquification of \textsc{Vertex Cover}.

\subsection{Our results}
We investigate the complexity of \pauvc for all pre-assignment models. 
As hardness results, we show:
\begin{enumerate}
    \item \pauvc is $\Sigma^P_2$-complete under any model, 
    \item \pauvc for bipartite graphs is NP-complete under any model.  
\end{enumerate}
For the positive side, we design exact exponential-time algorithms and FPT algorithms for the vertex cover number. 
Let $n$ be the number of the vertices and $\tau$ be the vertex cover number of $G$. Then, we present
\begin{enumerate} \setcounter{enumi}{2}
\item an $O(2.1996^n)$-time algorithm, which works for any model, 
\item an $O^*(3.6791^{\tau})$-time algorithm under \textsc{Include} model,\footnote{The $O^*$ notation suppresses polynomial factors in $n$.} 
\item an $O^*(3.6791^{\tau})$-time algorithm under \textsc{Mixed} and \textsc{Exclude} models,  
\end{enumerate}
Due to the FPT algorithms, \pauvc for bipartite graphs can be solved in $O(1.9181^n)$ time. Here, it should be noted that the fourth and fifth algorithms work in different ways, though they have the same running time.

The last results are for trees. Many graph problems are intractable for general graphs but polynomially solvable for trees. Most of such problems are just NP-complete, but some are even PSPACE-complete~\cite{demaine2015linear}. 
Thus, many readers might consider that \pauvc for trees is likely solvable in polynomial time. On the other hand, not a few problems (e.g., \textsc{Node Kayles}) are intractable in general, but the time complexity for trees still remains open, and only exponential-time algorithms are known~\cite{BodlaenderKT15,yoshiwatari2022winnerc}. In the case of \pauvc, no polynomial-time algorithm for trees is currently known. Instead, we give an exponential upper bound of the time complexity of \pauvc for trees. 
\begin{enumerate}\setcounter{enumi}{5}
\item an $O(1.4143^n)$-time algorithm for trees under \textsc{Include} model, and 
\item an $O(1.4143^n)$-time algorithm for trees under \textsc{Mixed} and \textsc{Exclude} models. 
\end{enumerate}
These algorithms also work differently, though they have the same running time.

\smallskip 

In the context of instance generation, these results imply that the pre-assignment approach could not be universally promising but would work well to generate instances with small vertex covers.  
\subsection{Related work}
Instance generation in combinatorial (optimization) problems is well-studied from several points of view, such as controlling some attributes~\cite{asahiro1996random} and hard instance generation~\cite{horie1997hard}, and some empirical studies use such generated instances for performance evaluation~\cite{cha1995performance}. 
Such instances are desirable to be sufficiently hard because they are supposed to be used for practical performance evaluation of algorithms for problems considered hard in computational complexity theory (e.g., ~\cite{sanchis1995generating,neuen2017benchmark,mccreesh2018subgraph}). 
Other than these, many studies present empirically practical instance generators (e.g., \cite{ullrich2018generic}). From the computational complexity side, it is shown that, unless NP $=$ coNP, for most NP-hard problems, there exists no polynomial-time algorithm capable of generating all instances of the problem, with known answers~\cite{sanchis1990complexity,matsuyama2021hardness}. 

A natural application of pre-assignment for uniquification is puzzle instance generation. In pencil puzzles such as \textsc{Sudoku}, an instance is supposed to have a unique solution as the answer. 
Inspired by them, Demaine et al.~\cite{DemaineEt:FCP} define FCP (Fewest Clues Problem) type problems including FCP-SAT and FCP-\textsc{Sudoku}. 
The FCP-SAT is defined as follows: Let $\phi$ be a CNF formula with a set of boolean variables $X$. Consider a subset $Y \subseteq X$ of variables and a partial assignment $f_Y : Y \to \{0, 1\}$.
If there is a unique satisfying assignment $f_X: X \to \{0, 1\}$ extending $f_Y$, that is, $f_Y(x) = f_X(x)$ for $x \in Y$, we call variables in $Y$ \emph{clues}.
The FCP-SAT problem asks, given a CNF formula $\phi$ and an integer $k$, whether $\phi$ has a unique satisfying assignment with at most $k$ clues.
They showed that FCP 3SAT, the FCP versions of several variants of the SAT problem, 
including \textsc{FCP 1-in-3 SAT}, are $\Sigma^P_2$-complete. 
Also FCP versions of several pencil-and-paper puzzles, including \textsc{Sudoku} are $\Sigma^P_2$-complete. 
Since the setting of FCP is suitable for puzzles, FCP versions of puzzles are investigated~\cite{HiguchiK19NP,Goergen2022All}.

The uniqueness of an (optimal) solution itself has been intensively and extensively studied for many combinatorial optimization problems. Some of them are shown to have the same time complexity to \textsc{Unique SAT}~\cite{hudry2019complexity,DBLP:journals/tcs/HudryL19}, while \textsc{Unique TSP} is shown to be $\Delta^P_2$-complete~\cite{Papadimitriou:UOTSP}. Juban \cite{Juban99} provides the dichotomy theorem for checking whether there
exists a unique solution to a given propositional formula. The theorem partitions the instances of the
problem between the polynomial-time solvable and coNP-hard cases, where Horn, anti-Horn, affine, and 2SAT formulas are the only polynomial-time solvable cases.


\section{Preliminaries}\label{sec:preli}
\subsection{Notations}
Let $G$ be an undirected graph.
Let $V(G)$ and $E(G)$ denote the vertex set and edge set of $G$, respectively.
For $v \in V(G)$, we denote by $N_G(v)$ the set of neighbors of $v$ in $G$, i.e., $N_G(v) = \{w \in V(G) : \{v, w\} \in E(G)\}$.
We extend this notation to sets: $N_G(X) = \bigcup_{v \in X}N_G(v) \setminus X$.
We may omit the subscript $G$ when no confusion is possible.
For $X \subseteq V(G)$, the subgraph of $G$ obtained by deleting all vertices in $X$ is denoted by $G - X$ or $G\setminus X$.

A \emph{vertex cover} of $G$ is a vertex set $C \subseteq V(G)$ such that for every edge in $G$, at least one end vertex is contained in $C$.
We particularly call a vertex cover of $G$ with cardinality $k$ a \emph{$k$-vertex cover} of $G$.
The minimum cardinality of a vertex cover of $G$ is denoted by $\tau(G)$. 
A set $I \subseteq V(G)$ of vertices that are pairwise non-adjacent in $G$ is called an \emph{independent set} of $G$.
It is well known that $I$ is an independent set of $G$ if and only if $V(G) \setminus I$ is a vertex cover of $G$.
A \emph{dominating set} of $G$ is a vertex subset $D \subseteq V(G)$ such that $V(G) = D \cup N(D)$, that is, every vertex in $G$ is contained in $D$ or has a neighbor in $D$.
If a dominating set $D$ is also an independent set of $G$, we call it an \emph{independent dominating set} of $G$.


\subsection{Models}\label{sec:model}
In \pauvc, given a graph $G$ and an integer $k$, the goal is to determine whether $G$ has a ``feasible pre-assignment of size at most $k$''.
In this context, there are three possible models to consider: \textsc{Include}, \textsc{Exclude}, and \textsc{Mixed}.
Under \textsc{Include} model, a pre-assignment is defined as a vertex subset $U \subseteq V(G)$, and a pre-assignment $\tilde{U}$ is said to be \emph{feasible} if there is a minimum vertex cover $U^*$ of $G$ (i.e., $\tau(G) = |U^*|$) such that $U \subseteq U^*$ and for every other minimum vertex cover $U$ of $G$, $\tilde{U} \setminus U$ is nonempty.
In other words, $\tilde{U}$ is feasible if there is a unique minimum vertex cover $U^*$ of $G$ that includes $\tilde{U}$ (i.e., $\tilde{U} \subseteq U^*$).
Under \textsc{Exclude} model, a pre-assignment is also defined as a vertex subset $\tilde{U} \subseteq V(G)$ as well as \textsc{Include}, and $\tilde{U}$ is said to be feasible if there is a unique minimum vertex cover $U^*$ of $G$ that excludes $X$ (i.e., $U^* \subseteq V(G) \setminus \tilde{U}$).
For these two models, the size of a pre-assignment is defined as its cardinality.
Under \textsc{Mixed} model, a pre-assignment is defined as a pair of vertex subsets $(\tilde{U}_{\rm in}, \tilde{U}_{\rm ex})$, and $(\tilde{U}_{\rm in}, \tilde{U}_{\rm ex})$ is said to be feasible if there is a unique minimum vertex cover $U^*$ of $G$ that includes $\tilde{U}_{\rm in}$ and excludes $\tilde{U}_{\rm ex}$ (i.e., $\tilde{U}_{\rm in} \subseteq U^* \subseteq V(G) \setminus \tilde{U}_{\rm ex}$).
We can assume that $\tilde{U}_{\rm in}$ and $\tilde{U}_{\rm ex}$ are disjoint, as otherwise the pre-assignment is trivially infeasible.
The size of the pre-assignment $(\tilde{U}_{\rm in}, \tilde{U}_{\rm ex})$ is defined as $|\tilde{U}_{\rm in}| + |\tilde{U}_{\rm ex}|$ under the \textsc{Mixed} model.
\begin{observation}\label{obs:reduction}
    Let $G$ be a graph and let $(\tilde{U}_{\rm in}, \tilde{U}_{\rm ex})$ be a feasible pre-assignment of $G$.
    Then, $G-(\tilde{U}_{\rm in} \cup N(\tilde{U}_{\rm ex}))$ has a unique minimum vertex cover of size $\tau(G) - |\tilde{U}_{\rm in} \cup N(\tilde{U}_{\rm ex})|$.
\end{observation}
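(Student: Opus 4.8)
The plan is to exploit the unique minimum vertex cover guaranteed by feasibility and push it down to the reduced graph. Write $U^*$ for the unique minimum vertex cover of $G$ with $\tilde{U}_{\rm in} \subseteq U^* \subseteq V(G) \setminus \tilde{U}_{\rm ex}$ (this is exactly what feasibility under the \textsc{Mixed} model provides, and the \textsc{Include} and \textsc{Exclude} models are the special cases $\tilde{U}_{\rm ex} = \emptyset$ and $\tilde{U}_{\rm in} = \emptyset$). Set $W := \tilde{U}_{\rm in} \cup N(\tilde{U}_{\rm ex})$ and $G' := G - W$. The first step is to observe that $W \subseteq U^*$: every vertex of $\tilde{U}_{\rm in}$ lies in $U^*$ by assumption, and for each $v \in \tilde{U}_{\rm ex}$ we have $v \notin U^*$, so $U^*$ must contain every neighbour of $v$ in order to cover the edges incident to $v$; hence $N(\tilde{U}_{\rm ex}) \subseteq U^*$ as well.

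The second step is the size claim $\tau(G') = \tau(G) - |W|$. Since $W \subseteq U^*$ and $W$ is disjoint from $V(G')$, the set $U^* \setminus W$ is a vertex cover of $G'$ of cardinality $|U^*| - |W| = \tau(G) - |W|$: any edge of $G'$ is an edge of $G$ with both endpoints outside $W$, and its endpoint lying in $U^*$ then lies in $U^* \setminus W$. Conversely, for any vertex cover $C'$ of $G'$ the set $C' \cup W$ covers every edge of $G$ (edges inside $G'$ via $C'$, all remaining edges via $W$), so $|C'| + |W| = |C' \cup W| \geq \tau(G)$, giving $\tau(G') \geq \tau(G) - |W|$. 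Combining the two bounds yields the size statement and shows $U^* \setminus W$ is a minimum vertex cover of $G'$.

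For uniqueness, let $C'$ be an arbitrary minimum vertex cover of $G'$, so $|C'| = \tau(G) - |W|$. The point is that $C' \cup W$ respects the original pre-assignment. By the convention $N(X) \cap X = \emptyset$ together with $\tilde{U}_{\rm in} \cap \tilde{U}_{\rm ex} = \emptyset$, we have $W \cap \tilde{U}_{\rm ex} = \emptyset$, so every vertex of $\tilde{U}_{\rm ex}$ survives in $G'$; moreover all of its neighbours lie in $N(\tilde{U}_{\rm ex}) \subseteq W$, so it is isolated in $G'$, and a minimum vertex cover never contains an isolated vertex. Hence $C' \cap \tilde{U}_{\rm ex} = \emptyset$, and therefore $C' \cup W$ is a vertex cover of $G$ of size $\tau(G)$ with $\tilde{U}_{\rm in} \subseteq W \subseteq C' \cup W \subseteq V(G) \setminus \tilde{U}_{\rm ex}$. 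By feasibility this forces $C' \cup W = U^*$, i.e., $C' = U^* \setminus W$, so the minimum vertex cover of $G'$ is unique.

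The only genuinely delicate point — and the step I expect to require the most care to state cleanly — is the uniqueness argument: it is not enough to check that $C' \cup W$ has the correct size, one must also verify that it still satisfies the \emph{exclusion} constraint, which is precisely where the observation that the vertices of $\tilde{U}_{\rm ex}$ become isolated in $G'$ (and hence are avoided by every minimum vertex cover) is needed. Everything else is bookkeeping with disjoint unions and the standard correspondence between vertex covers of $G$ and of $G - W$.
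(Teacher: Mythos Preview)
Your proof is correct and complete. The paper states this result as an observation without proof, so there is nothing to compare against; your argument is the natural one. One tiny remark: when you write ``all of its neighbours lie in $N(\tilde{U}_{\rm ex})$'' for $v \in \tilde{U}_{\rm ex}$, this uses that $\tilde{U}_{\rm ex}$ is an independent set of $G$ (so $v$ has no neighbour inside $\tilde{U}_{\rm ex}$), which follows immediately from feasibility since $U^*$ avoids all of $\tilde{U}_{\rm ex}$; you may wish to make that dependence explicit for the reader.
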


In the following, we compare these three models.

\begin{theorem}\label{thm:include-and-exclude}
    If $G$ has a feasible pre-assignment of size at most $k$ in the \textsc{Include} model, then $G$ has a feasible pre-assignment of size at most $k$ under the \textsc{Exclude} model.
\end{theorem}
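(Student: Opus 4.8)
The plan is to start from a feasible \textsc{Include} pre-assignment $\tilde U$ with $|\tilde U|\le k$, let $U^*$ be the (unique) minimum vertex cover of $G$ with $\tilde U\subseteq U^*$, and build an \textsc{Exclude} pre-assignment $\tilde W$ of size at most $k$ by replacing each forced-in vertex with a single well-chosen forbidden neighbour of it. First I would observe that every $v\in\tilde U$ has a neighbour outside $U^*$: otherwise $N(v)\subseteq U^*$, and then $U^*\setminus\{v\}$ would still cover all edges (edges at $v$ are covered by their other endpoint, which lies in $U^*\setminus\{v\}$; all other edges are untouched), contradicting minimality of $U^*$. Hence for each $v\in\tilde U$ we may fix some $w_v\in N(v)\setminus U^*$, and set $\tilde W:=\{w_v : v\in\tilde U\}$. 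Since we select one $w_v$ per element of $\tilde U$, we have $|\tilde W|\le|\tilde U|\le k$ (possibly strictly smaller, if some of the chosen neighbours coincide).

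Next I would check that $\tilde W$ is feasible under \textsc{Exclude}. For existence, $U^*$ is a minimum vertex cover with $U^*\cap\tilde W=\emptyset$ by construction, so some minimum vertex cover avoids $\tilde W$. For uniqueness, let $C$ be any minimum vertex cover of $G$ with $C\cap\tilde W=\emptyset$. For each $v\in\tilde U$, the edge $\{v,w_v\}$ must be covered by $C$; but $w_v\in\tilde W$, so $w_v\notin C$, forcing $v\in C$. Thus $\tilde U\subseteq C$, so $C$ is a minimum vertex cover containing $\tilde U$, and feasibility of $\tilde U$ in the \textsc{Include} model forces $C=U^*$. Therefore $U^*$ is the unique minimum vertex cover of $G$ avoiding $\tilde W$, i.e., $\tilde W$ is a feasible \textsc{Exclude} pre-assignment of size at most $k$.

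The only genuinely non-routine step is the first one: recognising that a vertex forced into the unique minimum vertex cover must have a neighbour in the complementary maximum independent set. This is exactly what allows trading each ``include'' vertex for a \emph{single} ``exclude'' vertex; the naive idea of excluding all of $N(\tilde U)$ would make $U^*$ the forced cover but violates the size bound. Everything else is a direct unfolding of the definitions, and the degenerate cases ($\tilde U=\emptyset$, or $G$ edgeless) are handled automatically, since then $\tilde W=\emptyset$ and the two notions of feasibility coincide.
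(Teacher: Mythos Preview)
Your proof is correct and follows essentially the same approach as the paper: pick, for each $v\in\tilde U$, a neighbour $w_v\in N(v)\setminus U^*$ (which exists by minimality of $U^*$), and take $\tilde W=\{w_v:v\in\tilde U\}$ as the \textsc{Exclude} pre-assignment. The only cosmetic difference is that the paper argues uniqueness via ``$N(w_v)\subseteq C$'', whereas you use the single edge $\{v,w_v\}$ directly; both immediately yield $\tilde U\subseteq C$.
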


\begin{proof}
    Let $\tilde{U} \subseteq V(G)$ be a feasible pre-assignment for $G$ in the \textsc{Include} model.
    Let $U^* \subseteq V(G)$ be the unique minimum vertex cover of $G$ such that $\tilde{U} \subseteq U^*$.
    Observe that $N(v) \setminus U^* \neq \emptyset$ for $v \in U^*$.
    This follows from the fact that if $N(v) \subseteq U^*$, $U^* \setminus \{v\}$ is also a vertex cover of $G$, which contradicts the minimality of $U^*$.
    For $v \in \tilde{U}$, we let $v'$ be an arbitrary vertex in $N(v) \setminus U^*$ and define $\tilde{U}' \coloneqq \{v' : v \in \tilde{U}\}$.
    Note that $v'$ and $w'$ may not be distinct even for distinct $v, w \in \tilde{U}$.
    We claim that $\tilde{U}'$ is a feasible pre-assignment of $G$ in the \textsc{Exclude} model.
    
    As $v' \notin U^*$ for each $v \in \tilde{U}$, $U^* \subseteq V(G) \setminus \tilde{U}'$ holds.
    To see the uniqueness of $U^*$ (under the pre-assignment $\tilde{U}'$ in \textsc{Exclude}), suppose that there is a minimum vertex cover $U$ of $G$ with $U \neq U^*$ such that $U \subseteq V(G) \setminus \tilde{U}'$.
    For each $v' \in \tilde{U}'$, all vertices in $N(v')$ must be included in $U$, which implies that $v \in U$.
    Thus, we have $\tilde{U} \subseteq U$, contradicting the uniqueness of $U^*$. By $|\tilde{U}'| \le |\tilde{U}|$, the theorem holds.
\end{proof}

The converse of \Cref{thm:include-and-exclude} does not hold in general.
Let us consider a complete graph $K_n$ with $n$ vertices.
As $\tau(K_n) = n - 1$, exactly one vertex is not included in a minimum vertex cover of $K_n$.
Under \textsc{Exclude} model, a pre-assignment containing exactly one vertex is feasible for $K_n$.
However, under \textsc{Include} model, any pre-assignment of at most $n - 2$ vertices does not uniquify a minimum vertex cover of $K_n$.
This indicates that \textsc{Exclude} model is ``stronger'' than \textsc{Include} model.
The theorem below shows that \textsc{Exclude} model and \textsc{Mixed} model are ``equivalent''.

\begin{theorem}\label{thm:mix-and-exclude}
    A graph $G$ has a feasible pre-assignment of size at most $k$ under the \textsc{Exclude} model if and only if $G$ has a feasible pre-assignment of size at most $k$ under the \textsc{Mixed} model.
\end{theorem}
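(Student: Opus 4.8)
The forward implication is immediate: the \textsc{Exclude} model is the special case of the \textsc{Mixed} model in which the ``include part'' is empty, so a feasible \textsc{Exclude} pre-assignment $\tilde U$ of size at most $k$ yields the feasible \textsc{Mixed} pre-assignment $(\emptyset, \tilde U)$ of the same size. The whole content of the theorem is therefore the converse, and I would prove it by reusing the substitution trick from the proof of \Cref{thm:include-and-exclude}, now applied only to the include part of a \textsc{Mixed} pre-assignment.

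Concretely, let $(\tilde U_{\rm in}, \tilde U_{\rm ex})$ be a feasible \textsc{Mixed} pre-assignment of size at most $k$, and let $U^*$ be the unique minimum vertex cover of $G$ with $\tilde U_{\rm in} \subseteq U^* \subseteq V(G)\setminus\tilde U_{\rm ex}$. For each $v\in\tilde U_{\rm in}$, since $v\in U^*$ and $U^*$ is a minimum (hence minimal) vertex cover, $N(v)\setminus U^*\neq\emptyset$ — otherwise $U^*\setminus\{v\}$ would still be a vertex cover; pick an arbitrary $v'\in N(v)\setminus U^*$ (these need not be distinct for distinct $v$). Define $\tilde U_{\rm ex}' \coloneqq \tilde U_{\rm ex}\cup\{v' : v\in\tilde U_{\rm in}\}$; then $|\tilde U_{\rm ex}'|\le|\tilde U_{\rm ex}|+|\tilde U_{\rm in}|\le k$. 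I claim $\tilde U_{\rm ex}'$ is a feasible \textsc{Exclude} pre-assignment. First, $U^*\subseteq V(G)\setminus\tilde U_{\rm ex}'$, because $U^*\subseteq V(G)\setminus\tilde U_{\rm ex}$ and each $v'\notin U^*$. For uniqueness, suppose $U\neq U^*$ is a minimum vertex cover with $U\subseteq V(G)\setminus\tilde U_{\rm ex}'$. Then $U\subseteq V(G)\setminus\tilde U_{\rm ex}$, and for each $v\in\tilde U_{\rm in}$ the edge $\{v,v'\}$ forces $v\in U$ because $v'\in\tilde U_{\rm ex}'$ gives $v'\notin U$; hence $\tilde U_{\rm in}\subseteq U\subseteq V(G)\setminus\tilde U_{\rm ex}$. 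By uniqueness of $U^*$ under the \textsc{Mixed} pre-assignment, $U=U^*$, a contradiction. This completes the converse and the proof.

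\textbf{Main obstacle.}
There is essentially no hard step; the only point needing care is the well-definedness of $v'$, i.e.\ that $N(v)\setminus U^*\neq\emptyset$ for every $v\in\tilde U_{\rm in}$. This uses minimality of $U^*$, and it is worth noting in passing that a feasible \textsc{Mixed} pre-assignment can never place an isolated vertex in $\tilde U_{\rm in}$ (a minimum vertex cover contains no isolated vertex), so each such $v$ genuinely has a neighbor. Everything else is a routine adaptation of the argument already used for \Cref{thm:include-and-exclude}.
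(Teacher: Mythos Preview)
Your proposal is correct and is essentially identical to the paper's own proof: the paper also handles the forward direction trivially via $(\emptyset,\tilde U)$ and, for the converse, replaces each $v\in\tilde U_{\rm in}$ by some neighbor $v'\in N(v)\setminus U^*$ (using minimality of $U^*$) to form the \textsc{Exclude} set $\tilde U_{\rm ex}\cup\{v':v\in\tilde U_{\rm in}\}$, then argues uniqueness exactly as you do.
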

\begin{proof}
    The forward implication follows from the observation that for any feasible pre-assignment $\tilde{U}$ of $G$ in the \textsc{Exclude} model, the pair $(\emptyset, \tilde{U})$ is a feasible pre-assignment of $G$ in the \textsc{Mixed} model.
    Thus, we consider the converse implication.
    
    Let $(\tilde{U}_{\rm in}, \tilde{U}_{\rm ex})$ be a feasible pre-assignment of $G$ under the \textsc{Mixed} model.
    Let $U^* \subseteq V(G)$ be the unique vertex cover of $G$ such that $\tilde{U}_{\rm in} \subseteq U^* \subseteq V(G) \setminus \tilde{U}_{\rm ex}$.
    As $U^*$ is a minimum vertex cover of $G$, we have $N(v) \setminus U^* \neq \emptyset$ for $v \in U^*$.
    For $v \in \tilde{U}_{\rm in}$, we let $v'$ be an arbitrary vertex in $N(v) \setminus U^*$ and define $\tilde{U} \coloneqq \tilde{U}_{\rm ex} \cup \{v' : v \in \tilde{U}_{\rm in}\}$.
    We claim that $\tilde{U}$ is a feasible pre-assignment of $G$ under the \textsc{Exclude} model.
    
    As $v' \notin U^*$ for each $v \in \tilde{U}_{\rm in}$, it holds that $U^* \subseteq V(G) \setminus \tilde{U}$.
    To see the uniqueness of $U^*$, suppose that there is a minimum vertex cover $U$ of $G$ with $U \neq U^*$ such that $U \subseteq V(G) \setminus \tilde{U}$.
    For each $v \in \tilde{U}$, all vertices in $N(v)$ must be included in $U$.
    This implies that $\tilde{U}_{\rm in} \subseteq U$.
    Thus, we have $\tilde{U}_{\rm in} \subseteq U \subseteq V(G) \setminus \tilde{U}_{\rm ex}$, contradicting the uniqueness of $U^*$.
\end{proof}

\subsubsection{Basic observations}
A class $\mathcal G$ of graphs is said to be \emph{hereditary} if for $G \in \mathcal G$, every induced subgraph of $G$ belongs to $\mathcal G$.
\begin{lemma}\label{lem:unique_alg}
    Let $\mathcal G$ be a hereditary class of graphs.
    Suppose that there is an algorithm $\mathcal A$ for computing a minimum vertex cover of a given graph $G \in \mathcal G$ that runs in time $T(n, m)$, where $n = |V(G)|$ and $m = |E(G)|$.\footnote{We assume that $n + m \le T(n, m) \le T(n', m')$ for $n \le n'$ and $m \le m'$.}
    Then, we can check whether $G \in \mathcal G$ has a unique minimum vertex cover in time $O(\tau(G) \cdot T(n, m))$.
\end{lemma}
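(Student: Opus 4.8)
The plan is the standard self‑reducibility argument for vertex cover, tuned so that the number of calls to $\mathcal A$ is bounded by $\tau(G)+1$ rather than $n$. First I would run $\mathcal A$ on $G$ to obtain a minimum vertex cover $C$ with $|C| = \tau(G)$. If $\tau(G) = 0$, then $G$ is edgeless and $\emptyset$ is its unique minimum vertex cover, so assume $\tau(G) \ge 1$ from now on.

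The key observation is the following chain of equivalences. First, $G$ has a minimum vertex cover distinct from $C$ if and only if there is a vertex $v \in C$ such that $G$ has a minimum vertex cover avoiding $v$: the forward direction holds because any minimum vertex cover $C' \neq C$ satisfies $C \setminus C' \neq \emptyset$ (equal cardinalities rule out $C \subsetneq C'$), so any $v \in C \setminus C'$ witnesses it, and the backward direction is immediate since such a cover differs from $C$ at $v$. Second, for a fixed vertex $v$, the minimum size of a vertex cover of $G$ that avoids $v$ equals $|N(v)| + \tau(G - N[v])$, where $N[v] \coloneqq N(v) \cup \{v\}$: a vertex cover avoiding $v$ must contain all of $N(v)$, and after removing $N[v]$ the remaining constraints are exactly those of a vertex cover of $G - N[v]$; conversely, $N(v)$ together with a minimum vertex cover of $G - N[v]$ is a vertex cover of $G$ avoiding $v$. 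Hence a minimum vertex cover of $G$ avoiding $v$ exists precisely when $|N(v)| + \tau(G - N[v]) = \tau(G)$.

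The algorithm is then: for each $v \in C$, run $\mathcal A$ on $G - N[v]$, which lies in $\mathcal G$ because $\mathcal G$ is hereditary, to obtain $\tau(G - N[v])$, and test whether $|N(v)| + \tau(G - N[v]) = \tau(G)$; report that the minimum vertex cover is non‑unique if and only if some test succeeds. Correctness is exactly the two equivalences above. For the running time, there are $1 + |C| = 1 + \tau(G)$ invocations of $\mathcal A$, each on an induced subgraph of $G$, so by the monotonicity assumption in the footnote each costs at most $T(n,m)$; the remaining bookkeeping (constructing each $G - N[v]$ and comparing cardinalities) is polynomial in $n + m \le T(n,m)$, giving total time $O(\tau(G) \cdot T(n,m))$.

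I do not anticipate a genuine obstacle: the only points that require care are iterating over the $\tau(G)$ vertices of the computed minimum vertex cover $C$ rather than over all $n$ vertices of $G$ (this is what produces the $\tau(G)$ factor instead of $n$), and invoking heredity so that $\mathcal A$ is legitimately applicable to each $G - N[v]$.
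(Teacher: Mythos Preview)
Your proposal is correct and is essentially the same argument as the paper's proof: compute one minimum vertex cover $C$, then for each $v\in C$ test whether $G$ has a minimum vertex cover avoiding $v$ by computing $\tau(G-N[v])$ via $\mathcal A$ and comparing $|N(v)|+\tau(G-N[v])$ with $\tau(G)$. Your write-up is in fact a bit more careful about the equivalences and the running-time accounting than the paper's, but there is no substantive difference in approach.
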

\begin{proof}
    We first compute an arbitrary minimum vertex cover $U$ of $G$.
    If $U$ is not a unique minimum vertex cover of $G$, $G$ has another minimum vertex cover $U'$ not containing $v$ for some $v \in U$.
    As $v \notin U'$, we have $N(v) \subseteq U'$.
    To find such a minimum vertex cover for each $v \in U$, it suffices to compute a vertex cover of $G - (N(v) \cup \{v\})$ of size $\tau(G) - |N(v)|$, which can be done by using $\mathcal A$ as $G - (N(v) \cup \{v\}) \in \mathcal G$.
\end{proof}

Using~\Cref{lem:unique_alg}, we have the following corollary.
\begin{corollary}\label{cor:unique_alg}
    Let $G$ be a hereditary class of graphs.
    Suppose that there is an algorithm for computing a minimum vertex cover of a given graph $G \in \mathcal G$ that runs in time $T(n, m)$, where $n = |V(G)|$ and $m = |E(G)|$.
    Given a graph $G \in \mathcal G$ and vertex sets $\tilde{U}_{\rm in}, \tilde{U}_{\rm ex} \subseteq V(G)$, a pre-assignment $(\tilde{U}_{\rm in}, \tilde{U}_{\rm ex})$ is feasible for $G$ under the \textsc{Mixed} model in time $O(\tau(G)\cdot T(n, m))$.
\end{corollary}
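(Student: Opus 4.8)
The plan is to reduce the feasibility test directly to the uniqueness test of \Cref{lem:unique_alg}, applied to the graph $G' \coloneqq G - W$ with $W \coloneqq \tilde{U}_{\rm in} \cup N(\tilde{U}_{\rm ex})$, which is precisely the graph appearing in \Cref{obs:reduction}. Since $\mathcal G$ is hereditary, $G' \in \mathcal G$, so the algorithm for computing a minimum vertex cover, and hence \Cref{lem:unique_alg}, is available on $G'$.

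First I would dispose of the degenerate cases in $O(n + m) \le T(n, m)$ time: if $\tilde{U}_{\rm in} \cap \tilde{U}_{\rm ex} \neq \emptyset$, or if $\tilde{U}_{\rm ex}$ is not an independent set of $G$, then no vertex cover of $G$ can contain $\tilde{U}_{\rm in}$ and avoid $\tilde{U}_{\rm ex}$, so the pre-assignment is infeasible. Otherwise, compute $\tau(G)$ and $\tau(G')$ using the assumed algorithm; this costs $O(T(n, m))$ by monotonicity of $T$.

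The core of the argument is the following correspondence. Every vertex cover $U^*$ of $G$ that contains $\tilde{U}_{\rm in}$ and avoids $\tilde{U}_{\rm ex}$ must contain $N(\tilde{U}_{\rm ex})$ in order to cover the edges incident with $\tilde{U}_{\rm ex}$, hence contains $W$; therefore $U^* \setminus W$ is a vertex cover of $G'$ and $|U^*| = |W| + |U^* \setminus W|$. Conversely, for any vertex cover $C'$ of $G'$, the set $W \cup C'$ is a vertex cover of $G$ containing $\tilde{U}_{\rm in}$; moreover $W \cap \tilde{U}_{\rm ex} = \emptyset$ (by disjointness of $\tilde{U}_{\rm in}$ and $\tilde{U}_{\rm ex}$ and the definition of $N(\cdot)$ on sets), and every \emph{minimum} vertex cover of $G'$ avoids $\tilde{U}_{\rm ex}$ since these vertices are isolated in $G'$, so $W \cup C'$ avoids $\tilde{U}_{\rm ex}$ whenever $C'$ is a minimum vertex cover of $G'$. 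Consequently $\tau(G) \le |W| + \tau(G')$, and if this inequality is strict then no vertex cover of $G$ of size $\tau(G)$ is consistent with the pre-assignment, so the pre-assignment is infeasible. If $|W| + \tau(G') = \tau(G)$, then the map $C' \mapsto W \cup C'$ is a bijection (injectivity is clear as $W$ is disjoint from $V(G')$) between the minimum vertex covers of $G'$ and the vertex covers of $G$ of size $\tau(G)$ that contain $\tilde{U}_{\rm in}$ and avoid $\tilde{U}_{\rm ex}$. Hence in this case the pre-assignment is feasible if and only if $G'$ has a unique minimum vertex cover.

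By \Cref{lem:unique_alg}, the last condition is decidable in $O(\tau(G') \cdot T(|V(G')|, |E(G')|))$ time, which is $O(\tau(G) \cdot T(n, m))$ because $\tau(G') \le \tau(G)$, $|V(G')| \le n$, $|E(G')| \le m$, and $T$ is monotone; adding the $O(T(n, m))$ cost of computing $\tau(G)$ and $\tau(G')$ yields the claimed running time. The only point that needs care is the correspondence above together with the size condition $|W| + \tau(G') = \tau(G)$, which is exactly what makes ``minimum vertex cover of $G'$'' translate into ``minimum vertex cover of $G$ consistent with the pre-assignment'' (in particular the observation that the vertices of $\tilde{U}_{\rm ex}$ become isolated in $G'$); this is routine bookkeeping and poses no further algorithmic obstacle.
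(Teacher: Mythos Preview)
Your proof is correct and follows essentially the same approach as the paper: handle the trivially infeasible cases, then reduce to checking whether $G' = G - (\tilde{U}_{\rm in} \cup N(\tilde{U}_{\rm ex}))$ has a unique minimum vertex cover of the right size via \Cref{lem:unique_alg}. You simply spell out the bijection and the size condition $|W| + \tau(G') = \tau(G)$ in more detail than the paper does (including the observation that vertices of $\tilde{U}_{\rm ex}$ become isolated in $G'$), but the underlying argument is the same.
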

\begin{proof}
    If either $\tilde{U}_{\rm in} \cap \tilde{U}_{\rm ex} \neq \emptyset$ or $\tilde{U}_{\rm ex}$ is not an independent set of $G$, the pre-assignment is trivially infeasible.
    Suppose otherwise.
    Observe that $G$ has a minimum vertex cover $U$ with $\tilde{U}_{\rm in} \subseteq C \subseteq V(G) \setminus \tilde{U}_{\rm ex}$ if and only if $G - (\tilde{U}_{\rm in} \cup N_G(\tilde{U}_{\rm ex}))$ has a vertex cover of size $\tau(G) - |\tilde{U}_{\rm in} \cup N_G(\tilde{U}_{\rm ex})|$.
    Therefore, we can check whether $G$ has a unique minimum vertex cover $U$ satisfying $\tilde{U}_{\rm in} \subseteq U \subseteq V(G) \setminus \tilde{U}_{\rm ex}$ using the algorithm in~\Cref{lem:unique_alg} as well.
\end{proof}

\section{Complexity}\label{sec:complexity}
This section is devoted to proving the complexity of \pauvc. In particular, we show that \pauvc is $\Sigma^P_2$-complete on general graphs and NP-complete on bipartite graphs.

\subsection{General graphs}\label{sec:complexity:general}

\begin{theorem}\label{thm:sigma}
{\rm PAU-VC} is $\Sigma^P_2$-complete under any type of pre-assignment.   
\end{theorem}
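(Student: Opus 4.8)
The plan is first to verify the easy direction. By \Cref{thm:mix-and-exclude} and \Cref{thm:include-and-exclude}, it suffices to argue membership for each model; the natural certificate is the unique minimum vertex cover $U^*$ together with the pre-assignment witnessing it. Formally, the statement ``$G$ has a feasible pre-assignment of size at most $k$'' is of the form $\exists \tilde{U}\, \forall U\, \varphi$, where $\tilde{U}$ ranges over (pairs of) vertex subsets of size at most $k$ and the inner predicate checks, in polynomial time, that exactly one minimum vertex cover is consistent with $\tilde{U}$ — but ``$U^*$ is a minimum vertex cover'' itself hides a coNP condition (no smaller cover exists). The clean way is: guess $\tilde{U}$, guess a candidate $U^*$ consistent with it, and also guess the claimed optimum value; then a single $\forall U$ quantifier checks that no vertex set of size $\le |U^*|$ is a vertex cover other than $U^*$ itself, and that $U^*$ is a cover. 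This is $\exists\forall$ with a polynomial-time matrix, so the problem lies in $\Sigma^P_2$.

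**Hardness: the source problem.** For $\Sigma^P_2$-hardness I would reduce from a known $\Sigma^P_2$-complete problem whose structure already resembles ``choose a small set so that a minimization becomes unique.'' The most convenient candidates are \textsc{FCP 1-in-3 SAT} (shown $\Sigma^P_2$-complete by Demaine et al.~\cite{DemaineEt:FCP}) or, alternatively, a direct quantified problem such as $\exists\forall$-\textsc{3SAT} or $\Sigma_2$-\textsc{SAT}. I lean toward FCP-type SAT because the ``clue'' mechanism (pre-assign a few variables so the satisfying assignment is unique) maps almost verbatim onto our pre-assignment of cover/non-cover roles. Concretely, I would take the standard polynomial reduction from \textsc{SAT}-style problems to \textsc{Vertex Cover} (variable gadgets = edges whose two endpoints represent the literal and its negation, clause gadgets = triangles or appropriate small graphs wired to literal vertices), chosen so that minimum vertex covers biject with satisfying assignments of the formula, and so that covering/non-covering a variable's literal vertex encodes the variable's truth value.

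**Hardness: transferring uniqueness and clues.** The heart of the argument is to show the reduction preserves the ``few clues $\Rightarrow$ unique solution'' quantifier alternation. Given the FCP instance $(\phi, k)$, build the vertex-cover instance $G_\phi$ as above and set the budget appropriately (e.g.\ $k$ or a linear function of $k$, depending on how many gadget vertices a single clue forces). The key claims to nail down are: (i) pre-assigning the literal vertices corresponding to a set $Y$ of clue variables (to ``in'' or ``out'' according to $f_Y$) forces exactly the intended partial assignment on the rest of $G_\phi$; (ii) a minimum vertex cover of $G_\phi$ extending such a pre-assignment is unique if and only if the satisfying assignment of $\phi$ extending $f_Y$ is unique; and (iii) conversely, any feasible pre-assignment of $G_\phi$ of small size can be ``cleaned up'' to touch only literal vertices (a pre-assigned clause-gadget vertex can be replaced by a pre-assignment on the corresponding literal vertices without increasing size, using the gadget's rigidity and an argument in the spirit of \Cref{thm:include-and-exclude}). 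Step (iii) — normalizing arbitrary pre-assignments back to ``variable clues'' without blowing up the budget — is the step I expect to be the main obstacle, and it is where the gadget must be engineered carefully (e.g.\ making each clause gadget ``self-forced'' once its literal vertices are fixed, so that pre-assigning internal vertices is never helpful). Once (i)–(iii) are in place, feasibility of a size-$\le k$ pre-assignment in $G_\phi$ matches having $\le k$ clues for $\phi$, completing the reduction for the \textsc{Include} model; \textsc{Exclude} and \textsc{Mixed} then follow from \Cref{thm:include-and-exclude} and \Cref{thm:mix-and-exclude}, or by a symmetric adaptation of the gadget. If FCP-SAT turns out awkward to wire up, the fallback is a from-scratch reduction from $\Sigma_2$-\textsc{SAT} $\exists X \forall Z\, \psi$, where the $X$-variables are simulated by candidate clue positions and the $\forall Z$ part is enforced by the uniqueness requirement on the minimum vertex cover.
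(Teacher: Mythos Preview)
Your plan matches the paper's proof closely: membership via the $\exists\forall$/NP-oracle characterization, hardness by reducing from \textsc{FCP 1-in-3 SAT} through a parsimonious SAT-to-VC gadget, with the normalization step (your (iii)) identified as the crux. Two places where the paper is more careful than your sketch. First, the ``standard'' variable-edge/clause-triangle reduction does not biject minimum vertex covers with 1-in-3 assignments; the paper imports a more elaborate gadget of Hudry--Lobstein with house-shaped variable gadgets, a global vertex $r$, and an additional triangle on the \emph{negated} literals of each clause, precisely to enforce the ``exactly one true'' condition and to make the normalization in (iii) go through cleanly.

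Second, and more importantly, your suggestion that \textsc{Exclude} hardness follows from \Cref{thm:include-and-exclude} is not sound as stated. That theorem only gives the implication ``Include-YES $\Rightarrow$ Exclude-YES'', so from ``FCP YES $\iff$ Include YES on $G_C$'' you get ``FCP YES $\Rightarrow$ Exclude YES'' but not the converse; and the $K_n$ example after \Cref{thm:include-and-exclude} shows the reverse implication fails in general. The paper therefore reproves the normalization argument separately for the \textsc{Exclude} model on the specific gadget $G_C$ --- exactly the ``symmetric adaptation'' you list as a fallback, which is in fact mandatory rather than optional. \textsc{Mixed} then does follow for free from \Cref{thm:mix-and-exclude}.
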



Recall that a decision problem belongs to $\Sigma^P_2$ if and only if it is solvable in polynomial time by a non-deterministic Turing machine augmented by an oracle for an NP-complete problem. We then see that \pauvc belongs to $\Sigma^P_2$, because  
under a pre-assignment as a witness we can decide in polynomial time whether an input graph has a unique vertex cover with an NP-oracle as shown in \Cref{cor:unique_alg}.   

We then prove the hardness by reducing $\Sigma^P_2$-complete problem \textsc{FCP 1-in-3 SAT} (\cite{DemaineEt:FCP}) to \pauvc.
Here, we utilize the reduction from \textsc{Unique 1-in-3 SAT} to \textsc{Unique VC} shown in~\cite{hudry2019complexity}. Their proof actually reduces an instance of \textsc{1-in-3 SAT} to an instance of \textsc{Vertex Cover}. 
Suppose an instance of \textsc{1-in-3 SAT} 
consists of a set $C$ of $m$ clauses over a set $X$ of $n$ variables. Each clause contains exactly three literals. Here, $X$ also represents the set of positive literals of variables in $X$ and $\bar{X}$ denotes the set of negative literals of variables in $X$.
The goal of \textsc{1-in-3 SAT} is to find a truth assignment of variables such that for each clause in $C$, exactly one literal is true in this assignment.
We call such an assignment a \emph{1-in-3 assignment} (for $C$).
The reduction of \cite{hudry2019complexity} constructs graph $G_C=(V(G_C),E(G_C))$ as an instance of \textsc{Vertex Cover} in the following way. The vertex set $V(G_C)$ consists of three sets of vertices as $V(G_C) \coloneqq V^{X}_{C} \cup V^{X'}_{C} \cup V^{\mathrm{clause}}_{C}$.   
\begin{itemize}
    \item $V^{X}_{C}$ is the set of vertices corresponding to literals in $X$, 
    i.e., $V^{X}_{C} = \bigcup_{x\in X}\{v(x), v(\bar{x})\}$. 
    \item For $V^{X}_{C}$, we prepare the auxiliary vertex set $V^{X'}_{C}=\{r\} \cup \bigcup_{x\in X} \{v'(x),v'(\bar{x}),u(x)\}$. As we explain later, for each variable $x$, three vertices $v'(x),v'(\bar{x}),u(x)$ form a triangle, and $v'(x)$ and $v'(\bar{x})$ are associated with $v(x)$ and $v(\bar{x})$, respectively. Vertex $r$ is a special vertex connected with $u(x)$ for every $x\in X$.     
    \item For a clause $c_i$, we prepare three vertices $v_{i}^{(1)},v_{i}^{(2)}$, and $v_{i}^{(3)}$, where $v_{i}^{(j)}$ is associated with the $j$-th literal in $c_i$. 
    Let $V^{\mathrm{clause}}_{C} = \bigcup_{c_i\in C}\{v_{i}^{(1)},v_{i}^{(2)},v_{i}^{(3)}\}$. 
    \item For these vertices, we define the set of edges: $E(G_C)\coloneqq E^{X}_{C} \cup E^{\mathrm{aux}}_{C} \cup E^{\mathrm{clause}}_{C} \cup E^{\mathrm{inter}}_{C}$, where 
    \begin{itemize}
    \item $E^{X}_{C}\coloneqq \bigcup_{x\in X}\{\{v(x),v(\bar{x})\},\{\{v(x),v'(\bar{x})\}, \{v(\bar{x}),v'({{x}})\}\}$ 
    \item $E^{\mathrm{aux}}_{C}\coloneqq \bigcup_{x\in X} \{\{v'(x),v'(\bar{x})\},\{v'(x),u(x)\}, \{v'(\bar{x}),u(x)\}, $\\
    $\{u(x),r\}\}$ 
     \item $E^{\mathrm{clause}}_{C}\coloneqq\bigcup_{c_i\in C} \{\{v_{i}^{(1)},v_{i}^{(2)}\}, \{v_{i}^{(2)},v_{i}^{(3)}\},\{v_{i}^{(3)},v_{i}^{(1)}\}\}$    
     \item $E^{\mathrm{inter}}_{C}\coloneqq \bigcup_{c_i\in C} \{\{v_{i}^{(j)},v(l_{i}^{(j)})\} \mid j=1,2,3\} \cup$\\$\bigcup_{c_i\in C} \{\{v(\bar{l}_i^{(1)}),v(\bar{l}_i^{(2)})\},\{v(\bar{l}_i^{(2)}),v(\bar{l}_i^{(3)})\}$, $\{v(\bar{l}_i^{(3)}),v(\bar{l}_i^{(1)})\}\}$, where $l_{i}^{(j)}$ represents the $j$-th literal of clause $c_i$. For example, for $c_4=(x_5, \bar{x}_6, x_7)$, $l_4^{(1)}\coloneqq x_5$, $l_4^{(2)}\coloneqq \bar{x}_6$, and $l_4^{(3)}\coloneqq x_7$. 
     Edge $\{v_{4}^{(1)},v(l_{4}^{(1)})\}$ actually stands for edge $\{v_{4}^{(1)},v(x_5)\}$, and $\{v(\bar{l}_4^{(1)}),v(\bar{l}_4^{(2)})\}$ actually stands for edge $\{v(\bar{x}_5),v(x_6)\}$.  
    \end{itemize}
\end{itemize}
\begin{figure}
    \centering
    \includegraphics[scale=0.4]{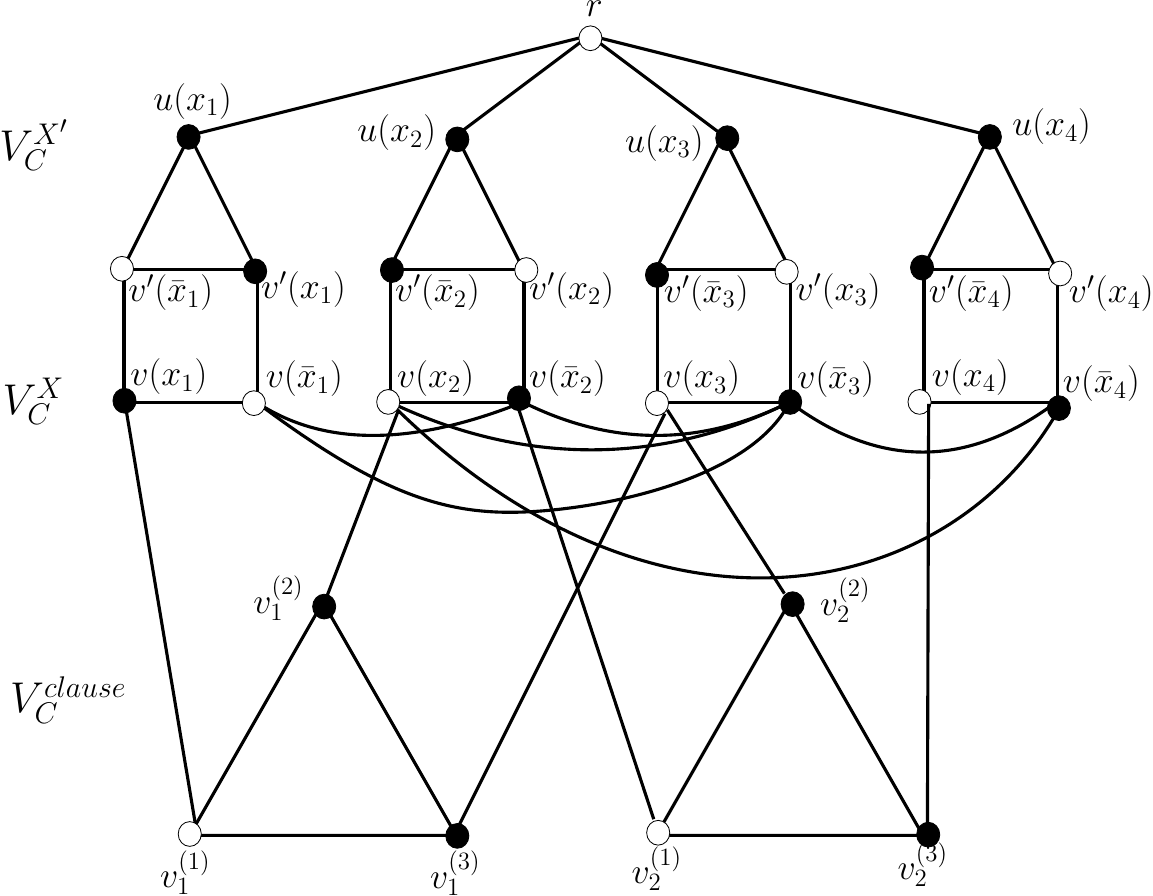}
    \caption{An example of the reduction from \textsc{FCP 1-in-3 SAT} to \pauvc. 
    The instance of \textsc{FCP 1-in-3 SAT} has four variables and two clauses, 
    $c_1 = (x_1, x_2, x_3), c_2 = (\bar{x}_2, x_3, x_4)$. The sixteen black vertices form the an optimal vertex cover $U_f$, which corresponds to the truth assignment $(x_1,x_2,x_3,x_4)=(1,0,0,0)$ satisfying 1-in-3 condition.}
    \label{fig:USAT-reduction}
\end{figure}
Figure \ref{fig:USAT-reduction} shows an example of the reduction. As seen in Figure \ref{fig:USAT-reduction}, $G_C$ is partitioned into variable gadgets, clause gadgets, and $r$. Each component of the variable gadgets forms so-called a house graph, and each component of the clause gadgets forms a triangle.  

\medskip 

We show that there is a bijection from the solutions of instance $C$ of \textsc{1-in-3 SAT} to the optimal solutions of $G_C$ of \textsc{Vertex Cover}, which are vertex covers with size $k=3n+2m$. The proof is constructive; we give an injective and surjective mapping. Note that we can forget the uniqueness of a solution at this moment. 

We first give an injection $\phi$ from solutions of instance $C$ of \textsc{1-in-3 SAT} to vertex covers of $G_C$ with size $k=3n+2m$.   
Let $f$ be a solution of 1-in-3 instance $C$; $f$ is a truth assignment of variables on $X$. Our injection $\phi$ maps $f$ into vertex cover $U_f$ of $G_C$.  
We first include every $u(x)$ in $U_f$.
According to $f$, we decide whether each vertex in $V^{X}_{C} \cup V^{X'}_{C}$ (i.e., variable gadgets) is included in the minimum vertex cover $U_f$ or not: $v(x),v'({x})\in U_f$ and $v(\bar{x}),v'(\bar{x})\not\in U_f$ if $f(x)=1$, $v(x),v'({x})\not\in U_f$ and $v(\bar{x}),v'(\bar{x})\in U_f$ otherwise. In other words, ``true'' vertices in $V^{X}_{C}$ and ``false'' vertices in $V^{X'}_{C}$ are included in $U_f$. Note that this definition guarantees that the mapping is injective.  
We can see that the number of vertices in $V^{X}_{C} \cup V^{X'}_{C}$ included in $U_f$ is $3n$, and every edge in $E^{X}_{C} \cup E^{\mathrm{aux}}_{C}$ is covered by a vertex in $U_f$. 

We next see triangles in the clause gadgets. Since $f$ is a 1-in-3 assignment of $C$, each triangle for $c_i$ is connected with exactly one vertex $v_i^{(j)}$ of $V^{X}_{C}$ included in $U_f$. We then exclude $v_i^{(j)}$ from $U_f$ and include the two other $v_i^{(j')}$ and $v_i^{(j'')}$ in $U_f$, which cover all edges in the triangle. 
In other words, if a vertex in such a triangle is adjacent to a false (resp., true) vertex in $V^{X}_{C}$, it is in $U_f$ (resp., not in $U_f$). The number of vertices here included in $U_f$ is $2m$, and the size of $U_f$ is $3n+2m$; this completes the injection. 

Now we see that $U_f$ is a vertex cover of $G_C$. We have already seen that the edges in $E^{X}_{C} \cup E^{\mathrm{aux}}_{C} \cup E^{\mathrm{clause}}_{C}$ are covered by $U_f$. 
The remaining are the edges in $E^{\mathrm{inter}}_{C}$. These edges are between $V^{X}_{C}$ and $V^{\mathrm{clause}}_{C}$ or between two vertices in $V^{X}_{C}$. 
If an edge in the former category is incident with a true vertex in $V^{X}_{C}$, it is covered by the true vertex. If an edge in the former category is incident with a false vertex in $V^{X}_{C}$, the other endpoint in a triangle is included in $U_f$; it is again covered. An edge in the latter category is part of a triangle of complement literals in $c_i$. Since $f$ is a 1-in-3 assignment, the triangle contains two true vertices in $V^{X}_{C}$; every edge in the triangle is covered. By these, every edge in $G_C$ is covered by a vertex in $U_f$.  

We next show that the above mapping is surjective. To this end, we characterize a $(3n+2m)$-vertex cover of $G_C$.  
\begin{claim}
    If $G_C$ has a vertex cover $U$ with $3n+2m$, $U$ satisfies the following: 
    \begin{enumerate}
        \item for every $c_i\in C$, $|\{v_i^{(1)}, v_i^{(2)}, v_i^{(3)}\}\cap U|=2$,
        \item $r\not\in U$ and $u(x)\in U$ for every $x\in X$, 
        \item for every $x\in X$, $\{v(x),v(\bar{x}),v'(x),v'(\bar{x})\}\cap U$ is either $\{v(x),v'({x})\}$ or $\{v(\bar{x}),v'(\bar{x})\}$,  
        \item $U$ does not contain any vertex in $V_C^X$ adjacent to a vertex in $V_C^{\mathrm{clause}}\cap U$.
    \end{enumerate}
\end{claim}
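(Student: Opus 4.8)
The plan is to work gadget by gadget, starting from the fact that $V(G_C)$ is partitioned into the $n$ variable gadgets $H_x$ on $\{v(x),v(\bar x),v'(x),v'(\bar x),u(x)\}$, the $m$ clause gadgets $T_i$ on $\{v_i^{(1)},v_i^{(2)},v_i^{(3)}\}$, and the single vertex $r$ (the sizes add up: $5n+3m+1=|V(G_C)|$). Each $T_i$ induces a triangle, and, as noted in the construction, each $H_x$ induces a \emph{house graph}, whose vertex cover number is $3$. Since these parts are pairwise vertex-disjoint, any vertex cover $U$ of $G_C$ obeys $|U|\ge 3n+2m+|U\cap\{r\}|$. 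Assuming $|U|=3n+2m$ forces equality everywhere: $r\notin U$, $|U\cap H_x|=3$ for every $x\in X$, and $|U\cap T_i|=2$ for every $c_i\in C$. The last statement is item (1). Combining $r\notin U$ with the edges $\{u(x),r\}\in E^{\mathrm{aux}}_C$ gives $u(x)\in U$ for all $x$, which together with $r\notin U$ is item (2).

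For item (3), I would observe that the four edges $\{v(x),v(\bar x)\}$, $\{v(\bar x),v'(x)\}$, $\{v'(x),v'(\bar x)\}$, $\{v'(\bar x),v(x)\}$ form a $4$-cycle inside $H_x$, and that the only other edges of $H_x$ are $\{v'(x),u(x)\}$ and $\{v'(\bar x),u(x)\}$, both covered by $u(x)\in U$. Hence the remaining two vertices of $U\cap H_x$ must be a size-$2$ vertex cover of that $4$-cycle, i.e., a pair of opposite (non-adjacent) vertices, which is either $\{v(x),v'(x)\}$ or $\{v(\bar x),v'(\bar x)\}$; that is item (3). In particular, for every variable $x$ exactly one of $v(x),v(\bar x)$ lies in $U$, a fact I will reuse.

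The core of the argument is item (4), the only place where the clause and variable gadgets interact. Fix a clause $c_i$ with literals $l_i^{(1)},l_i^{(2)},l_i^{(3)}$. The key intermediate claim is that exactly one of the three vertices $v(l_i^{(1)}),v(l_i^{(2)}),v(l_i^{(3)})$ belongs to $U$. For the upper bound: the triangle on $\{v(\bar l_i^{(1)}),v(\bar l_i^{(2)}),v(\bar l_i^{(3)})\}$ coming from $E^{\mathrm{inter}}_C$ forces at least two of these complement-vertices into $U$, and since item (3) says exactly one of $v(l),v(\bar l)$ lies in $U$ for each literal $l$, at most one of $v(l_i^{(j)})$ lies in $U$. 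For the lower bound: by item (1) some $v_i^{(j'')}\notin U$, so the edge $\{v_i^{(j'')},v(l_i^{(j'')})\}$ forces $v(l_i^{(j'')})\in U$. Now suppose item (4) fails, say $v_i^{(j)}\in U$ and $v(l_i^{(j)})\in U$ for some $i,j$. Then $v(l_i^{(j)})$ is the unique literal-vertex of $c_i$ in $U$, so $v(l_i^{(j')}),v(l_i^{(j'')})\notin U$ for the other two indices, and the inter-edges $\{v_i^{(j')},v(l_i^{(j')})\}$ and $\{v_i^{(j'')},v(l_i^{(j'')})\}$ force $v_i^{(j')},v_i^{(j'')}\in U$; hence all three of $v_i^{(1)},v_i^{(2)},v_i^{(3)}$ lie in $U$, contradicting item (1).

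I expect the main obstacle to be assembling the ``exactly one literal-vertex per clause in $U$'' step cleanly, since it needs both directions simultaneously (one from the complement triangle together with item (3), the other from the clause triangle together with the inter-edges), and one must keep track of the standing assumption that each clause has three distinct literals on three distinct variables, so that the complement triangle really is a triangle. Everything else is a routine counting and structural check once items (1)--(3) are in hand.
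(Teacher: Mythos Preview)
Your proposal is correct and follows essentially the same approach as the paper: a tight gadget-by-gadget count forces $r\notin U$, $|U\cap H_x|=3$, and $|U\cap T_i|=2$, from which items (1)--(3) drop out via the edges $\{u(x),r\}$ and the $4$-cycle inside each house; item (4) then comes from combining the complement triangle in $E^{\mathrm{inter}}_C$ with item (3) and the inter-edges $\{v_i^{(j)},v(l_i^{(j)})\}$. The only organizational difference is that the paper argues item (4) directly (fix the unique $v_i^{(j)}\notin U$ and chase implications), whereas you first isolate the intermediate ``exactly one literal-vertex of each clause lies in $U$'' claim and then derive a contradiction---your contradiction step is in fact redundant, since once you know the unique literal-vertex in $U$ is the one opposite the missing clause-vertex, item (4) is immediate.
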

\begin{proof}
Suppose $U$ is a $(3n+2m)$-vertex cover of $G_C$. 
We first consider $E^{\mathrm{clause}}_{C}$. To cover the edges 
$\{v_{i}^{(1)},v_{i}^{(2)},\}, \{v_{i}^{(2)},v_{i}^{(3)}\},\{v_{i}^{(3)},v_{i}^{(1)}\}$ for every $c_i\in C$, at least two of $v_{i}^{(1)},v_{i}^{(2)},v_{i}^{(3)}$ should be included in $U$. Thus for every $c_i\in C$, $|\{v_i^{(1)}, v_i^{(2)}, v_i^{(3)}\}\cap U|\ge 2$ holds, and at least $2m$ vertices in $V^{\mathrm{clause}}_{C}$ are in $U$. Similarly, to cover edges $\{v(x),v(\bar{x})\},\{v(\bar{x}),v'({{x}})\}$, 
 $\{v'({{x}}),v'({\bar{x}})\}$, and $\{v(x),v'(\bar{x})\}$ in $E^{X}_{C} \cup E^{\mathrm{aux}}_{C}$, $U$ should include at least $\{v(x),v'({x})\}$ or $\{v(\bar{x}),v'(\bar{x})\}$. Furthermore, to cover edges $\{v'(x),u(x)\}$ and $\{v'(\bar{x}),u(x)\}$ in $E^{\mathrm{aux}}_{C}$, one more vertex should be included, i.e., if $\{v(x),v'({x})\}\subseteq U$, either $v'(\bar{x})$ or $u(x)$ should be included in $U$, and if $\{v(\bar{x}),v'(\bar{x})\}\subseteq U$, either $v'({x})$ or $u(x)$ should be included in $U$. Hence, for each $x$, at least three vertices in variable gadgets should be included in $U$, and thus at least $3n$ vertices in $V^{X}_C\cup V^{X'}_C$ are in $U$.  Therefore, at least $3n+2m$ vertices are necessary to cover the edges in 
  $E^{X}_{C} \cup E^{\mathrm{aux}}_{C} \cup E^{\mathrm{clause}}_{C}$. Since the size of a vertex cover is $3n+2m$, every inequality should hold as an equality. Thus we have 
  condition \textbf{1}. 

  We then consider edges forming $\{u(x),r\}$. To cover these edges, either $u(x)$ or $r$ should be included in $U$. However, we cannot include in $U$ any vertex other than the ones included in the above argument. Thus, each $\{u(x),r\}$ is covered by $u(x)$, which also implies conditions~\textbf{2}~and~\textbf{3}.

  We now consider condition \textbf{4}. 
  We focus on a clause $c_i=(l_i^{(1)},l_i^{(2)},l_i^{(3)})$. By condition~\textbf{1}, exactly one vertex $v_i^{(j)}$ does not belong to $U$. Without loss of generality, we set $j=1$, i.e., $v_i^{(1)}\not \in U$ and $v_i^{(2)}, v_i^{(3)} \in U$. Since edge $\{v_i^{(1)},v(l_i^{(1)})\}$ should be covered, $U$ contains $v(l_i^{(1)})$, which implies $v(\bar{l}_i^{(1)}) \not\in U$ by condition \textbf{3}. 
  Then, to cover edges $\{v(\bar{l}_i^{(1)}),v(\bar{l}_i^{(2)})\}$, $\{v(\bar{l}_i^{(2)}),v(\bar{l}_i^{(3)})\}$, $\{v(\bar{l}_i^{(3)}),v(\bar{l}_i^{(1)})\}$, 
  $U$ should contain both $v(\bar{l}_i^{(2)})$ and $v(\bar{l}_i^{(3)})$, which implies $v({l}_i^{(2)}), v({l}_i^{(3)})\not\in U$ by condition \textbf{3}. This leads to condition \textbf{4}. 
\end{proof}

\medskip 

Suppose that $U$ is a vertex cover satisfying the conditions of the above claim.  
We then define $f_U$ to satisfy that $f_U(x)=1$ if and only if $v(x) \in U$, which is consistent due to condition \textbf{3} of the claim. Then, we can easily see that the vertex cover mapped from $\phi(f_U)$ is $U$ itself. Thus, there is a bijection from the solutions of instance $C$ of \textsc{1-in-3 SAT} to $(3n+2m)$-vertex covers of $G_C$. 

\medskip 

Now we are ready to show that \textsc{FCP 1-in-3 SAT} can be solved via \pauvc.
In \textsc{FCP 1-in-3 SAT}, given an instance $(X, C)$ of \textsc{1-in-3 SAT} and an integer $k$, the goal is to compute a subset $Y \subseteq X$ of at most $k$ variables such that by taking some truth assignment $f_Y$ to $Y$, there is a unique 1-in-3 assignment $f_X$ to $X$ extending $f_Y$, that is, $f_X(x) = f_Y(x)$ for $x \in Y$.
We call variables in $Y$ \emph{clues}.
There are three models of pre-assignment: \textsc{Include}/\textsc{Exclude}/\textsc{Mixed}. 
As seen in \Cref{thm:mix-and-exclude}, \textsc{Mixed} is essentially equivalent to \textsc{Exclude}. 
Thus we consider only \textsc{Include} and \textsc{Exclude}.  

We first consider the \textsc{Include} model. For a given instance $C$ of \textsc{FCP 1-in-3 SAT}, we construct $G_C$. We claim that $C$ has at most $k$ clues (i.e., a truth assignment for at most $k$ variables) such that the resulting $C$ has a unique solution if and only if $G_C$ has a feasible pre-assignment of size at most $k$. For the if-direction, suppose $X'$ is the set of $k$ variables. 
If $x\in X'$ is assigned true (resp., false), we include $v(x)$ (resp., $v(\bar{x})$) in $\tilde{U}$, which makes a pre-assignment with size $k$. The bijection shown above guarantees that a $(3n+2m)$-vertex cover of $G_C$ containing $\tilde{U}$ is unique, that is, the pre-assignment is feasible.

For the only-if direction, suppose that $\tilde{U}$ is a feasible pre-assignment for $G_C$ with size at most $k$.
Let $U^*$ denote the unique minimum vertex cover of $G_C$ with $\tilde{U} \subseteq U^*$. Note that $\tilde{U}$ should be consistent with the conditions in the claim; for example, $\tilde{U}$ cannot include $v(x)$ and $v(\bar{x})$ simultaneously. 
If $\tilde{U}\subseteq V^X_C$, $\tilde{U}$ defines $k$ clues for $C$. Namely, if $v(x)\in \tilde{U}$ (resp., $v(\bar{x})\in \tilde{U}$), $x$ is assigned true (resp., false) in the clues, whose size becomes $k$.  Then the bijection guarantees the uniqueness of the solution under the clues.  
We then consider the case when $\tilde{U}$ contains a vertex outside of $V^X_C$. Such a vertex can be one in $V_C^{X'}$ or $V_C^{\mathrm{clause}}$. If $u(x)\in \tilde{U}\cap V_C^{X'}$, we can remove it by condition \textbf{2} of the claim. 
If $v'(l)\in \tilde{U}\cap V_C^{X'}$, where $l$ is a literal (i.e., $l=x$ or $l=\bar{x}$ for some $x$), condition \textbf{3} of the claim allows us to replace $\tilde{U}$ by $\tilde{U}\coloneqq\tilde{U}\setminus \{v'(l)\}\cup \{v(l)\}$, which does not increase the size of $\tilde{U}$. By repeatedly applying these procedures, we can assume that $\tilde{U}\cap V_C^{X'}=\emptyset$. The last case is that some $v_i^{(j)} \in \tilde{U} \cap V_C^{\mathrm{clause}}$. Let $l$ be the literal corresponding to $v_i^{(j)}$.  
Then, condition \textbf{4} implies that $v(l)\not\in U^*$ and $v(\bar{l})\in U^*$. 
Conversely, $v(\bar{l}) \in U^*$ implies that $v(l)\not\in U^*$ and $v_i^{(j)}\in U^*$; we can replace $\tilde{U}$ by $\tilde{U}\coloneqq\tilde{U}\setminus \{v_i^{(j)}\}\cup \{v(\bar{l})\}$, which does not increase the size of $\tilde{U}$ again. Thus, all the cases are reduced to $\tilde{U}\subseteq V^X_C$. 
Overall, the if-and-only-if relation holds for the \textsc{Include} model.  

We next consider the \textsc{Exclude} model. 
Also for this case, we show that $C$ has at most $k$ clues (i.e., a truth assignment for at most $k$ variables) such that the resulting $C$ has a unique solution if and only if $G_C$ has a feasible pre-assignment of size at most $k$. For the if-direction, suppose $X'$ is the set of $k$ variables. 
If $x\in X'$ is assigned true (resp., false), we exclude $v(\bar{x})$ (resp., $v({x})$) in $\hat{U}$, which makes a pre-assignment with size $k$. For this case also, the bijection shown above guarantees that $G_C$ has a unique $(3n+2m)$-vertex cover of $G_C$ that does not contain any vertex in $\hat{U}$, that is, the pre-assignment is feasible.

For the only-if direction, suppose that $\hat{U}$ is a feasible pre-assignment for $G_C$. Let $U^*$ denote the unique minimum vertex cover of $G_C$ with $U^* \subseteq V(G_C) \setminus \hat{U}$.
If $\hat{U}\subseteq V^X_C$, a similar argument to \textsc{Include} defines $k$ clues for $C$, which guarantees the uniqueness of the solution. We then consider the case when $\hat{U}$ contains a vertex outside of $V^X_C$. Such a vertex can be $r$ or one in $V_C^{X'} \cup V_C^{\mathrm{clause}}$. If $r\in \hat{U}$, we can remove it by condition \textbf{2} of the claim. If $v'(l)\in \hat{U}\cap V_C^{X'}$, where $l$ is a literal (i.e., $l=x$ or $l=\bar{x}$ for some $x$), we can replace $v'(l)$ by $v(l)$ in $\hat{U}$ by a similar argument to \textsc{Include} again. 
The last case is that some $v_i^{(j)} \in \hat{U} \cap V_C^{\mathrm{clause}}$. Let $l$ be the literal corresponding to $v_i^{(j)}$. Since $U^*$ is a vertex cover, $v(l)$ should be in $U^*$ to cover edge $\{v_i^{(j)}, v(l)\}$, and $v(\bar{l}) \not\in U^*$ by condition \textbf{3} of the claim.  
Conversely, $v(\bar{l}) \not\in U^*$ implies that $v(l)\in U^*$ and $v_i^{(j)}\not\in U^*$ by condition \textbf{4} of the claim; we can replace $\tilde{U}$ by $\hat{U}\coloneqq\hat{U}\setminus \{v_i^{(j)}\}\cup \{v(\bar{l})\}$, and only-if direction holds. By these, the if-and-only-if relation holds also for the \textsc{Exclude} model.  

We consider the complexity of \pauvc on bipartite graphs.
As observed in \Cref{thm:mix-and-exclude}, it suffices to consider \textsc{Include} and \textsc{Mixed} models.
For bipartite graphs, \textsc{Vertex Cover} is equivalent to the problem of finding a maximum cardinality matching by K\"{o}nig's theorem (see Chapter 2 in \cite{Diestel}), which can be computed in polynomial time by the Hopcroft-Karp algorithm~\cite{HopcroftK73}.
Combining this fact and \Cref{cor:unique_alg}, \pauvc belongs to NP under both models.

To see the NP-hardness of \pauvc, we first consider the \textsc{Mixed} model.
We perform a polynomial-time reduction from \textsc{Independent Dominating Set} on bipartite graphs.
In \textsc{Independent Dominating Set}, we are given a graph $G$ and an integer $k$ and asked whether $G$ has an independent dominating of size at most $k$.
This problem is NP-complete even on bipartite graphs~\cite{CorneilEt:IDS}.

From a bipartite graph $G$ for \textsc{Independent Dominating Set}, we construct a graph $G'$ by adding, for each vertex $v \in V(G)$, a new vertex $v'$ and an edge between $v$ and $v'$.
 We denote by $V' = V(G')\setminus V(G)$ the set of added vertices to $G$ and $E' = E(G')\setminus E(G)$ the set of added edges to $G$.
 Clearly, $G'$ is bipartite and $E'$ is a perfect matching of $G'$.
 By K\"{o}nig's theorem, we have $\tau(G') = |E'| = |V(G)|$.
 This also implies the following observation.

 \begin{observation}\label{obs:bipartite:endvertex}
     Let $U$ be an arbitrary minimum vertex cover of $G'$.
     Then, for any edge $e \in E'$, exactly one end vertex of $e$ belongs to $U$.
 \end{observation}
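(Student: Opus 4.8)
The plan is to prove \Cref{obs:bipartite:endvertex} directly, then use it as the backbone for the reduction from \textsc{Independent Dominating Set}. For the observation itself, the argument is short: since $E'$ is a perfect matching of $G'$, any vertex cover must contain at least one end vertex of each of the $|E'|$ edges in $E'$, and these edges are vertex-disjoint, so any vertex cover has size at least $|E'|$. Combined with $\tau(G') = |E'|$ (from K\"{o}nig's theorem, as $E'$ is also a maximum matching — it saturates $V'$, and no matching can be larger than $|V'| = |V(G)|$ since one side has that many vertices), a minimum vertex cover $U$ has exactly $|E'|$ vertices, so it must pick exactly one end vertex from each edge of $E'$ and nothing else. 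That gives the ``exactly one'' conclusion.

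The main work, which I expect comes after this statement in the paper, is establishing the correspondence between independent dominating sets of $G$ and feasible pre-assignments of $G'$. First I would understand the structure of minimum vertex covers of $G'$ via the observation: a minimum vertex cover $U$ of $G'$ is determined by the set $S \coloneqq U \cap V(G)$, with the complementary $V(G) \setminus S$ corresponding to the added leaves that are chosen. The cover condition on the original edges $E(G)$ forces $S$ to be a vertex cover of $G$, i.e., $V(G) \setminus S$ is independent in $G$; and conversely every such $S$ yields a minimum vertex cover. So minimum vertex covers of $G'$ biject with independent sets of $G$ (via $I = V(G) \setminus S$). The point of the leaves $v'$ is that excluding $v'$ from the cover forces $v \in U$, i.e., forces $v \notin I$, while leaving $v'$'s status unconstrained allows either. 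The plan would then be: a pre-assignment that excludes a set $T \subseteq V(G)$ (in the \textsc{Mixed}/\textsc{Exclude} sense) forces those vertices into every consistent minimum vertex cover, hence forces $T \cap I = \emptyset$; uniqueness of the resulting cover should translate into $I$ being the unique independent set of $G$ disjoint from $T$, which is exactly the statement that $I = V(G) \setminus (T \cup N(T))$-style maximality holds, i.e., $T$ dominates $V(G) \setminus T$. Balancing this with the size budget $k$ is where the independent dominating set of size at most $k$ enters.

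The step I expect to be the main obstacle is pinning down exactly which vertices the pre-assignment should touch so that the size bound matches cleanly and the uniqueness is forced in both directions. In particular one must be careful that a feasible pre-assignment for $G'$ might a priori use the leaf vertices $v' \in V'$ or include-type assignments, not just exclude-type assignments on $V(G)$; the argument will need normalization lemmas (analogous to the case analysis in the proof of \Cref{thm:sigma}) showing any feasible pre-assignment can be transformed, without increasing its size, into one of a canonical form supported on $V(G)$. Handling the \textsc{Include} model may require a separate (though parallel) reduction, since \Cref{thm:include-and-exclude} only gives one direction. Once the canonical form is established, verifying that ``unique minimum vertex cover consistent with the pre-assignment'' is equivalent to ``unique maximal independent set containing the prescribed part'' and then to the independent dominating set condition should be routine bookkeeping using \Cref{obs:bipartite:endvertex}.
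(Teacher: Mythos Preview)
Your proof of the observation is correct and is exactly the implicit justification the paper intends: it notes that $E'$ is a perfect matching with $\tau(G')=|E'|$ by K\"{o}nig and says this ``also implies'' the observation, without spelling out the counting argument you give.

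Your additional sketch of the surrounding reduction goes beyond the statement at hand but is largely on target---the paper does prove a normalization lemma pushing any feasible pre-assignment onto $V(G')\setminus V'$ and handles the \textsc{Include} model by a short separate argument afterward. One slip to correct when you flesh this out: excluding a set $T\subseteq V(G)$ forces $T$ \emph{out} of the cover, hence $T\subseteq I$ (not $T\cap I=\emptyset$); this is precisely the direction the paper uses, taking $(\tilde U_{\rm in},\tilde U_{\rm ex})=(\emptyset,D)$ for an independent dominating set $D$ so that $D\subseteq I$ is forced, after which the domination property pins down $I=D$.
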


Now, we show that $G$ has an independent dominating set of size at most $k$ if and only if $G$ has a feasible pre-assignment of size at most $k$.

\begin{lemma}\label{lem:NPhard1}
    If $G$ has an independent dominating set $D$ of size at most $k$, then $G'$ has a feasible pre-assignment $(\tilde{U}_{\rm in}, \tilde{U}_{\rm ex})$ of size at most $k$.
\end{lemma}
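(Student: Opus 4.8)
The plan is to take the pre-assignment $(\tilde U_{\rm in},\tilde U_{\rm ex}) \coloneqq (\emptyset, D)$, whose size is $|D| \le k$, and to prove it is feasible for $G'$ under the \textsc{Mixed} model, i.e., that $G'$ has a \emph{unique} minimum vertex cover contained in $V(G')\setminus D$. To this end I would first exhibit the cover and then prove its uniqueness.

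\emph{Exhibiting the cover.} Set $U^* \coloneqq (V(G)\setminus D)\cup\{v' : v\in D\}$. Since $D$ is an independent set of $G$, the set $V(G)\setminus D$ is a vertex cover of $G$ and hence covers every edge of $E(G)$; for a pendant edge $\{v,v'\}\in E'$ with $v\in D$ it is covered by $v'\in U^*$, and for $v\notin D$ it is covered by $v\in U^*$. Thus $U^*$ is a vertex cover of $G'$, and $|U^*| = (|V(G)|-|D|) + |D| = |V(G)| = \tau(G')$ by K\"onig's theorem as noted above, so $U^*$ is a minimum vertex cover; moreover $U^*\cap D=\emptyset$. In particular $(\emptyset, D)$ is not trivially infeasible ($D$ is independent in $G'$ as well, since the only new edges touch $V'$).

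\emph{Uniqueness.} Let $U$ be any minimum vertex cover of $G'$ with $U\cap D = \emptyset$. Fix $v\in D$. Since $v\notin U$, every edge of $G$ incident with $v$ must be covered by its other endpoint, so $N_G(v)\subseteq U$; and applying \Cref{obs:bipartite:endvertex} to the pendant edge $\{v,v'\}\in E'$ gives $v'\in U$. Because $D$ is a dominating set of $G$, every vertex of $V(G)\setminus D$ belongs to $N_G(v)$ for some $v\in D$, hence $V(G)\setminus D\subseteq U$; combined with $\{v':v\in D\}\subseteq U$, this already accounts for $|V(G)| = \tau(G') = |U|$ vertices, forcing $U = (V(G)\setminus D)\cup\{v':v\in D\} = U^*$. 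Therefore $(\emptyset, D)$ is a feasible pre-assignment of size $|D|\le k$.

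The argument is short, and I do not foresee a genuine obstacle; the only step requiring a bit of care is the uniqueness part, where one has to combine the domination property of $D$ (to pin down which original vertices are forced into $U$) with the rigidity of the perfect matching $E'$ via \Cref{obs:bipartite:endvertex} (to pin down the added vertices), and then invoke the exact equality $\tau(G') = |V(G)|$ to rule out anything else.
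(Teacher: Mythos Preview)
Your proof is correct and follows essentially the same approach as the paper: both take $(\tilde U_{\rm in},\tilde U_{\rm ex})=(\emptyset,D)$, exhibit the same candidate cover $U^*=(V(G)\setminus D)\cup\{v':v\in D\}$ (the paper writes this as $(V(G)\setminus D)\cup(V'\cap N_{G'}(D))$, which is the same set), and use the domination of $D$ together with $\tau(G')=|V(G)|$ for uniqueness. Your uniqueness argument is a direct counting argument (showing $U^*\subseteq U$ and $|U|=|U^*|$), while the paper argues by contradiction from a vertex in $U\setminus U^*$; these are minor stylistic variants of the same idea.
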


\begin{proof}
    Let $D$ be an independent dominating set of size at most $k$ in $G$.
    We show that $(\tilde{U}_{\rm in}, \tilde{U}_{\rm ex}) \coloneqq (\emptyset, D)$ is a feasible pre-assignment for $G'$, that is, $G'$ has a unique minimum vertex cover $U^*$ such that $U^* \subseteq V(G') \setminus D$.

    Let $U^* \coloneqq (V(G)\setminus D) \cup (V' \cap N_{G'}(D))$.
    Clearly, $U^* \subseteq V(G') \setminus D$.
    We first show that $U^*$ is a vertex cover of $G'$.
    Since $D$ is an independent set of $G$, $V(G)\setminus D$ is a vertex cover of $G$, meaning that $V(G)\setminus D$ covers all edges in $E(G)$.
    Thus, edges not covered by $V(G)\setminus D$ are contained in $E'$ and incident to $D$, which can be covered by $V' \cap N_{G'}(D)$.
    Therefore, $U^*$ is a vertex cover of $G'$.
    Moreover,
    \begin{align*}
        |U^*| &= |(V(G) \setminus D) \cup (V' \cap N_{G'}(D))| = |V(G) \setminus D| + |V' \cap N_{G'}(D)|= |V(G)|.
    \end{align*}
    This concludes that $U^*$ is a minimum vertex cover of $G'$, as $\tau(G') = |V(G)|$.

    We next show that $U^*$ is the unique minimum vertex cover of $G'$ under the pre-assignment $(\emptyset, D)$.
    Suppose to the contrary that $G'$ has another minimum vertex cover $U \neq U^*$ such that $U \subseteq V(G') \setminus D$.     
    As $U \neq U^*$, by~\Cref{obs:bipartite:endvertex}, there is a vertex $v' \in V'$ with $v' \in U \setminus U^*$.
    Let $v \in V(G') \setminus V'$ be the unique neighbor of $v'$ in $G'$.
    By~\Cref{obs:bipartite:endvertex}, we have $v \in U^*$, which implies that $v \notin D$.
    Since $D$ is a dominating set of $G$, $v$ has a neighbor $w \in N_{G'}(v)$ in $G'$ that is contained in $D$.
    As $U \subseteq V(G') \setminus D$, $w$ is not contained in $U$.
    This implies that the edge $\{v, w\}$ is not covered by $U$, contradicting the fact that $U$ is a vertex cover of $G'$.
\end{proof}


Next, we show the converse of Lemma~\ref{lem:NPhard1}. 
\begin{lemma}\label{lem:conv}
    If $G'$ has a feasible pre-assignment of size at most $k$, then $G$ has an independent dominating set of size at most $k$.
\end{lemma}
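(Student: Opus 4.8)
The plan is to start from a feasible pre-assignment $(\tilde{U}_{\rm in}, \tilde{U}_{\rm ex})$ of $G'$ of size at most $k$, let $U^*$ be the unique minimum vertex cover of $G'$ with $\tilde{U}_{\rm in} \subseteq U^* \subseteq V(G') \setminus \tilde{U}_{\rm ex}$, and extract an independent dominating set of $G$ of size at most $k$ from it. The natural candidate is $D \coloneqq V(G) \setminus U^*$; by \Cref{obs:bipartite:endvertex}, for each $v \in V(G)$ exactly one of $v, v'$ lies in $U^*$, so $|D| = |V(G)| - |U^* \cap V(G)| = |U^* \cap V'|$, i.e., $D$ is precisely the set of original vertices whose pendant partner is in $U^*$. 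Since $U^*$ is a vertex cover of $G$ (it covers all of $E(G) \subseteq E(G')$), $D$ is an independent set of $G$ automatically. So the two things to prove are (i) $D$ is a dominating set of $G$, and (ii) $|D| \le k$.

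For the domination claim (i): suppose some $v \in V(G)$ is not dominated by $D$, i.e., $v \notin D$ and no neighbor of $v$ in $G$ is in $D$. Then $v \in U^*$, and every neighbor $w$ of $v$ in $G$ also lies in $U^*$. I claim that then $U^* \setminus \{v\} \cup \{v'\}$ is another minimum vertex cover of $G'$: removing $v$ loses coverage only of edges incident to $v$, namely the edges $\{v,w\}$ with $w \in N_G(v)$ (still covered, as $w \in U^*$) and the pendant edge $\{v,v'\}$ (now covered by $v'$). This new cover has the same size. To contradict uniqueness I must check it still respects the pre-assignment: it avoids $\tilde{U}_{\rm ex}$ provided $v' \notin \tilde{U}_{\rm ex}$, and contains $\tilde{U}_{\rm in}$ provided $v \notin \tilde{U}_{\rm in}$. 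This is the one point needing care — a priori $v$ could be forced in or $v'$ forced out — so the argument must first normalize the pre-assignment (as in the proofs of \Cref{thm:include-and-exclude,thm:mix-and-exclude}): any vertex of $\tilde{U}_{\rm in}$ of the form $v'$ can be swapped for its neighbor $v$ in $\tilde{U}_{\rm ex}$-style bookkeeping, and symmetrically, so that WLOG $\tilde{U}_{\rm in} \cup \tilde{U}_{\rm ex}$ interacts cleanly with the pendant structure; alternatively one picks, among the pendant pair $\{v,v'\}$, the right representative to perturb. I expect this normalization to be the main obstacle: making precise that for an undominated $v$ one of the two swaps $U^* \triangle \{v,v'\}$ or a related local change is both cover-preserving and pre-assignment-respecting, contradicting uniqueness.

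For the size bound (ii): once $D = V(G) \setminus U^*$ is shown to be an independent dominating set, I need $|D| \le k = |\tilde{U}_{\rm in}| + |\tilde{U}_{\rm ex}|$. Here I would argue that $U^*$ being the \emph{unique} cover under the pre-assignment forces $D$ to be ``small'' — more precisely, I would show every vertex of $D$ must be ``witnessed'' by the pre-assignment. Concretely, for $v \in D$ we have $v \notin U^*$, so $N_{G'}(v) \subseteq U^*$; in particular $v' \in U^*$. If $v'$ were neither in $\tilde{U}_{\rm in}$ nor forced, one could try to flip $v$ into the cover and $v'$ out — but that may break domination of some other vertex, which is exactly what pins things down. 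The cleanest route is: after normalizing so that $\tilde{U}_{\rm ex} \subseteq V(G)$ and $\tilde{U}_{\rm in} = \emptyset$ (mirroring \Cref{lem:NPhard1}'s construction $(\emptyset, D)$), show $D \subseteq \tilde{U}_{\rm ex}$ by proving that any $v \in D \setminus \tilde{U}_{\rm ex}$ yields an alternative cover $(U^* \setminus \{v'\}) \cup \{v\} \cup (\text{fix for newly uncovered } E' \text{ edges})$ respecting the pre-assignment — but since $v' $'s only role is covering edges to $N_G(v) \setminus D$ together with $\{v,v'\}$, and $v$ covers all of those plus possibly more, this swap works whenever $v \notin \tilde{U}_{\rm ex}$, contradicting uniqueness. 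Hence $D \subseteq \tilde{U}_{\rm ex}$ and $|D| \le |\tilde{U}_{\rm ex}| \le k$. Combining (i) and (ii) completes the proof; together with \Cref{lem:NPhard1} and \Cref{thm:mix-and-exclude} this establishes NP-hardness of \pauvc on bipartite graphs under all models.
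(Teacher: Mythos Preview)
Your candidate $D \coloneqq V(G)\setminus U^*$ is the natural guess, and your size argument (ii) actually goes through cleanly under the paper's normalization $\tilde U_{\rm in}\cup\tilde U_{\rm ex}\subseteq V(G)$ (\Cref{cor:bipartite:canonical}): the swap $(U^*\setminus\{v'\})\cup\{v\}$ for $v\in D\setminus\tilde U_{\rm ex}$ respects the pre-assignment, so in fact $D=\tilde U_{\rm ex}$ and $|D|\le k$. The gap is in (i): your domination swap $(U^*\setminus\{v\})\cup\{v'\}$ only respects the pre-assignment when $v\notin\tilde U_{\rm in}$, and you cannot eliminate that case. Concretely, take $G$ the path $a\text{--}b\text{--}c$ and the feasible pre-assignment $(\{a,b,c\},\emptyset)$ of size $3$; then $U^*=\{a,b,c\}$ and your $D=\emptyset$, which is not a dominating set of $G$. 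So $D=V(G)\setminus U^*$ can simply fail to be an IDS.

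Your proposed remedy --- normalize further to $\tilde U_{\rm in}=\emptyset$ \emph{and} $\tilde U_{\rm ex}\subseteq V(G)$ --- is exactly what would make (i) work, but it is not obtainable from \Cref{thm:include-and-exclude,thm:mix-and-exclude,lem:bipartite:deg1}: applying \Cref{thm:mix-and-exclude} to a vertex $v\in\tilde U_{\rm in}$ with $N_G(v)\subseteq U^*$ forces the pendant $v'$ into $\tilde U_{\rm ex}$, and pushing $v'$ back via \Cref{lem:bipartite:deg1} reintroduces $v$ into $\tilde U_{\rm in}$. In fact, the existence of such a normalized pre-assignment is equivalent (via \Cref{lem:NPhard1}) to the existence of an IDS of size $\le k$, so assuming it is circular. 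The paper avoids this by \emph{not} insisting that $D$ itself be dominating: it takes $D$ to be any maximal independent set of $G$ extending $\tilde U_{\rm ex}$ (so domination is free), and then uses precisely your swap arguments to prove the structural claim $V(G)=\tilde U_{\rm in}\cup\tilde U_{\rm ex}\cup N_G(\tilde U_{\rm ex})$, which gives $D\subseteq V(G)\setminus N_G(\tilde U_{\rm ex})\subseteq\tilde U_{\rm in}\cup\tilde U_{\rm ex}$ and hence $|D|\le k$. Your swaps are the right engine; what is missing is replacing $V(G)\setminus U^*$ by its extension to a maximal independent set.
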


To prove \Cref{lem:conv}, we need several auxiliary lemmas.

\begin{lemma}\label{lem:bipartite:deg1}
    Let $v ' \in V'$ and let $v$ be the unique neighbor of $v'$ in $G'$.
    If there is a feasible pre-assignment $(\tilde{U}_{\rm in}, \tilde{U}_{\rm ex})$ with $v' \in \tilde{U}_{\rm in}$ {\rm(}resp. $v' \in \tilde{U}_{\rm ex}${\rm)} for $G'$, then $(\tilde{U}_{\rm in} \setminus \{v'\}, \tilde{U}_{\rm ex} \cup \{v\})$ {\rm(}resp. $(\tilde{U}_{\rm in} \cup \{v\}, \tilde{U}_{\rm ex} \setminus \{v'\})${\rm)} is a feasible pre-assignment for $G'$.
\end{lemma}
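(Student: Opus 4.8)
The plan is to exploit \Cref{obs:bipartite:endvertex}, which says that every minimum vertex cover of $G'$ contains \emph{exactly one} endpoint of each matching edge in $E'$; in particular, for the pendant edge $\{v, v'\}$, membership of $v'$ in a minimum vertex cover is equivalent to non-membership of $v$, and vice versa. So the two constraints ``$v' \in \tilde{U}_{\rm in}$'' and ``$v \in \tilde{U}_{\rm ex}$'' impose the same condition on minimum vertex covers, and likewise ``$v' \in \tilde{U}_{\rm ex}$'' and ``$v \in \tilde{U}_{\rm in}$''. I would first handle the case $v' \in \tilde{U}_{\rm in}$ and then the symmetric case $v' \in \tilde{U}_{\rm ex}$.

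For the case $v' \in \tilde{U}_{\rm in}$: let $U^*$ be the unique minimum vertex cover with $\tilde{U}_{\rm in} \subseteq U^* \subseteq V(G') \setminus \tilde{U}_{\rm ex}$. Since $v' \in \tilde{U}_{\rm in} \subseteq U^*$, \Cref{obs:bipartite:endvertex} gives $v \notin U^*$; hence $U^*$ also satisfies $\tilde{U}_{\rm in} \setminus \{v'\} \subseteq U^* \subseteq V(G') \setminus (\tilde{U}_{\rm ex} \cup \{v\})$ (and, incidentally, $v \notin \tilde{U}_{\rm in}$, so the new in/ex sets are disjoint and the size does not increase). For uniqueness, suppose $U$ is a minimum vertex cover with $\tilde{U}_{\rm in} \setminus \{v'\} \subseteq U \subseteq V(G') \setminus (\tilde{U}_{\rm ex} \cup \{v\})$. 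As $v \notin U$ and $U$ must cover the edge $\{v, v'\}$, we get $v' \in U$, so $\tilde{U}_{\rm in} \subseteq U \subseteq V(G') \setminus \tilde{U}_{\rm ex}$; by uniqueness of $U^*$ under the original pre-assignment, $U = U^*$. This proves that $(\tilde{U}_{\rm in} \setminus \{v'\}, \tilde{U}_{\rm ex} \cup \{v\})$ is feasible.

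For the case $v' \in \tilde{U}_{\rm ex}$: let $U^*$ be the unique minimum vertex cover with $\tilde{U}_{\rm in} \subseteq U^* \subseteq V(G') \setminus \tilde{U}_{\rm ex}$. Since $v' \notin U^*$, \Cref{obs:bipartite:endvertex} gives $v \in U^*$, so $U^*$ satisfies $\tilde{U}_{\rm in} \cup \{v\} \subseteq U^* \subseteq V(G') \setminus (\tilde{U}_{\rm ex} \setminus \{v'\})$. For uniqueness, take any minimum vertex cover $U$ with $\tilde{U}_{\rm in} \cup \{v\} \subseteq U \subseteq V(G') \setminus (\tilde{U}_{\rm ex} \setminus \{v'\})$; as $v \in U$, \Cref{obs:bipartite:endvertex} forces $v' \notin U$, whence $U \subseteq V(G') \setminus \tilde{U}_{\rm ex}$ and $\tilde{U}_{\rm in} \subseteq U$, so $U = U^*$.

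There is no serious obstacle here: the argument is a short translation between the pendant vertex $v'$ and its unique neighbor $v$ via \Cref{obs:bipartite:endvertex}. The only point requiring a moment's care is the uniqueness direction — one must check that every minimum vertex cover meeting the \emph{new} constraints also meets the \emph{old} ones, which in one case uses only that $U$ is a vertex cover (covering $\{v,v'\}$) and in the other case genuinely needs \Cref{obs:bipartite:endvertex} to rule out both endpoints lying in $U$.
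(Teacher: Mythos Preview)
Your proof is correct and follows essentially the same approach as the paper: use \Cref{obs:bipartite:endvertex} on the pendant edge $\{v,v'\}$ to show that $U^*$ still satisfies the modified constraints, and then argue uniqueness by showing any minimum vertex cover meeting the new constraints also meets the old ones. The paper only spells out the case $v' \in \tilde{U}_{\rm in}$ and dismisses the other as analogous, whereas you treat both; your closing remark that the second uniqueness direction genuinely needs the ``exactly one endpoint'' part of \Cref{obs:bipartite:endvertex} (to rule out $v,v' \in U$) is a nice observation the paper leaves implicit.
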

\begin{proof}
    Suppose that $v' \in \tilde{U}_{\rm in}$.
    Let $U^*$ be the unique minimum vertex cover of $G'$ such that $\tilde{U}_{\rm in} \subseteq U^* \subseteq V(G) \setminus \tilde{U}_{\rm ex}$.
    As $v' \in U^*$, by~\Cref{obs:bipartite:endvertex}, it holds that $v \notin U^*$.
    This implies $\tilde{U}_{\rm in} \setminus \{v'\} \subseteq U^* \subseteq V(G) \setminus (\tilde{U}_{\rm ex} \cup \{v\})$.
    Thus, it suffices to show that $U^*$ is the unique minimum vertex cover of $G'$ such that
    $\tilde{U}_{\rm in} \setminus \{v'\} \subseteq U^* \subseteq V(G) \setminus (\tilde{U}_{\rm ex} \cup \{v\})$.
    Suppose for contradiction that $G'$ has another minimum vertex cover $U$ with $U \neq U^*$ such that $\tilde{U}_{\rm in} \setminus \{v'\} \subseteq U \subseteq V(G) \setminus (\tilde{U}_{\rm ex} \cup \{v\})$.
    As $v \notin U$, we have $v' \in U$.
    This implies that $\tilde{U}_{\rm in} \subseteq U$.
    Thus, $\tilde{U}_{\rm in} \subseteq U \subseteq V(G') \setminus \tilde{U}_{\rm ex}$, contradicting the uniqueness of $U^*$.

    The case of $v' \in \tilde{U}_{\rm ex}$ is analogous, which is omitted here.
\end{proof}

By repeatedly applying Lemma~\ref{lem:bipartite:deg1}, the following corollary.

\begin{corollary}\label{cor:bipartite:canonical}
    If $G'$ has a feasible pre-assignment of size at most $k$, then it has a feasible pre-assignment $(\tilde{U}_{\rm in}, \tilde{U}_{\rm ex})$ such that $\tilde{U}_{\rm in} \cup \tilde{U}_{\rm ex} \subseteq V(G') \setminus V'$.
\end{corollary}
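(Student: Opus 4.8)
The corollary is an exhaustive‑rewriting argument, so the plan is to introduce a potential function and show that it strictly decreases under each application of \Cref{lem:bipartite:deg1} while the size of the pre-assignment does not increase. For a pre-assignment $(\tilde{U}_{\rm in},\tilde{U}_{\rm ex})$ of $G'$, define the potential $\Phi(\tilde{U}_{\rm in},\tilde{U}_{\rm ex}) \coloneqq |(\tilde{U}_{\rm in}\cup\tilde{U}_{\rm ex})\cap V'|$, i.e.\ the number of ``stub'' vertices of $V'$ used by the assignment. Starting from any feasible pre-assignment of size at most $k$: if $\Phi=0$ we are already done; otherwise pick some $v'\in(\tilde{U}_{\rm in}\cup\tilde{U}_{\rm ex})\cap V'$, let $v$ be its unique neighbor in $G'$ (which lies in $V(G')\setminus V'=V(G)$), and apply \Cref{lem:bipartite:deg1}, which replaces $v'$ by $v$ and moves it to the other part of the pre-assignment (from $\tilde{U}_{\rm in}$ to $\tilde{U}_{\rm ex}$, or conversely). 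Then iterate.

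First I would verify the two invariants. The size is non-increasing because each rewrite deletes one element from one part and inserts one element into the other, so $|\tilde{U}_{\rm in}|+|\tilde{U}_{\rm ex}|$ either stays the same or drops (it drops if the inserted vertex was already present on that side). For the potential: the deleted vertex $v'$ belongs to $V'$ and, after the step, lies in neither part (it is removed from its part, and it is $v$—not $v'$—that is added to the other part; and $v'$ cannot have been in both parts, since a feasible pre-assignment has disjoint parts), while the inserted vertex $v\in V(G)$ contributes nothing to $\Phi$; hence $\Phi$ decreases by exactly one. Since $\Phi$ is a non-negative integer bounded by $|V'|$, the rewriting halts after finitely many steps at a feasible pre-assignment with $\Phi=0$, which is precisely the condition $\tilde{U}_{\rm in}\cup\tilde{U}_{\rm ex}\subseteq V(G')\setminus V'$, and whose size is still at most $k$.

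The one point that deserves care—and the only place the argument could slip—is that feasibility must persist along the entire chain of rewrites, not merely for a single application: before re-invoking \Cref{lem:bipartite:deg1} we must know that each intermediate object is still a genuine pre-assignment, in particular that its two parts remain disjoint. This is automatic, since \Cref{lem:bipartite:deg1} certifies that its output is feasible, and feasibility already presupposes a minimum vertex cover $U^*$ with $\tilde{U}_{\rm in}\subseteq U^*\subseteq V(G')\setminus\tilde{U}_{\rm ex}$, which forces $\tilde{U}_{\rm in}\cap\tilde{U}_{\rm ex}=\emptyset$. Thus the induction goes through with no extra bookkeeping, and the rest of the proof is purely mechanical.
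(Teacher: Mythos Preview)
Your proposal is correct and follows exactly the same approach as the paper, which simply states that the corollary follows ``by repeatedly applying Lemma~\ref{lem:bipartite:deg1}.'' Your potential-function formulation is merely a careful unpacking of that one-line argument, verifying explicitly the size bound, the termination, and the persistence of feasibility that the paper leaves implicit.
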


Now, we are ready to prove \Cref{lem:conv}.

\begin{proof}[Proof of \Cref{lem:conv}]
    Let $(\tilde{U}_{\rm in}, \tilde{U}_{\rm ex})$ be a feasible pre-assignment of size at most $k$.
    By~\Cref{cor:bipartite:canonical}, we assume that $\tilde{U}_{\rm in} \cup \tilde{U}_{\rm ex} \subseteq V(G') \setminus V'$.
    Let $U^*$ be a vertex cover of $G'$ such that $\tilde{U}_{\rm in} \subseteq U^* \subseteq V(G') \setminus \tilde{U}_{\rm ex}$.
    As $\tilde{U}_{\rm ex} \subseteq V(G') \setminus V' = V(G)$, $\tilde{U}_{\rm ex}$ is an independent set of $G$.  
    Let $D$ be an arbitrary maximal independent set of $G$ that includes $\tilde{U}_{\rm ex}$.
    Since every maximal independent set of $G$ is also an independent dominating set of $G$, it suffices to show that $|D| \le k$.
    Since $D$ is an independent set including $\tilde{U}_{\rm ex}$, $D$ does not contain any vertex in $N_G(\tilde{U}_{\rm ex})$, i.e., $D \subseteq V(G) \setminus N_G(\tilde{U}_{\rm ex})$.
    As we shall show later, it holds that $V(G) = \tilde{U}_{\rm in} \cup \tilde{U}_{\rm ex} \cup N_G(\tilde{U}_{\rm ex})$.
    Thus, 
    \begin{align*}
        |D|
        &\leq |V(G) \setminus N_G(\tilde{U}_{\rm ex})|
        \leq |\tilde{U}_{\rm in} \cup \tilde{U}_{\rm ex}|\leq k.
    \end{align*}
    Therefore, $G$ has an independent dominating set $D$ of size at most $k$.

    To complete the proof, it remains to prove that $V(G) = \tilde{U}_{\rm in} \cup \tilde{U}_{\rm ex} \cup N_G(\tilde{U}_{\rm ex})$.
    Suppose to the contrary that there exists a vertex $v \in V(G)$ with $v \notin \tilde{U}_{\rm in} \cup \tilde{U}_{\rm ex} \cup N_G(\tilde{U}_{\rm ex})$.
    Let $v'$ be the unique vertex of $V'$ adjacent to $v$ in $G'$.
    
    Suppose that $v \not\in U^*$.
    Then we have $v' \in U^*$ due to \Cref{obs:bipartite:endvertex}.
    By the assumption, $v \notin \tilde{U}_{\rm ex}$.
    Moreover, as $\tilde{U}_{\rm in} \cup \tilde{U}_{\rm ex} \subseteq V(G') \setminus V' = V(G)$, we have $v' \notin \tilde{U}_{\rm in} \cup \tilde{U}_{\rm ex}$.
    Thus, $U \coloneqq (U^*\setminus\{v'\})\cup\{v\}$ is a vertex cover of $G'$ satisfying $\tilde{U}_{\rm in} \subseteq U \subseteq V(G') \setminus \tilde{U}_{\rm ex}$, which contradicts the uniqueness of $U^*$.
    
    Suppose otherwise (i.e., $v \in U^*$). 
    In this case, we can assume that all the vertices $w \in V(G)$ with $w \notin \tilde{U}_{\rm in} \cup \tilde{U}_{\rm ex} \cup N_G(X_{\rm ex})$ satisfies $w \in U^*$, as otherwise we derive a similar contradiction in the previous case.
    We first claim that $U \coloneqq (U^*\setminus\{v\})\cup\{v'\}$ is a vertex cover of $G'$.  
    As $U^*$ is a vertex cover of $G'$, all the edges not incident to $v$ are covered by $U$.
    Thus, to prove that $U$ is a vertex cover of $G'$, it suffices to show that $N_G(v) \subseteq U^*$.
    Suppose that $w \in N_G(v)$ does not belong to $U^*$.
    As $v \notin N_{G}(\tilde{U}_{\rm ex})$, we have $w \notin \tilde{U}_{\rm ex}$.
    Moreover, as $w \notin U^*$, we have $w \notin \tilde{U}_{\rm in} \cup N_G(\tilde{U}_{\rm ex})$.
    Thus, $w \notin \tilde{U}_{\rm in} \cup \tilde{U}_{\rm ex} \cup N_G(\tilde{U}_{\rm ex})$, contradicting the above assumption.
    Hence, $U$ is a vertex cover of $G'$.
    
    As $\tilde{U}_{\rm in} \cup \tilde{U}_{\rm ex} \subseteq V(G)$, $v' \notin \tilde{U}_{\rm ex}$.
    Moreover, as $v \notin \tilde{U}_{\rm in}$, it holds that $\tilde{U}_{\rm in} \subseteq U \subseteq V(G') \setminus \tilde{U}_{\rm ex}$, which contradicts the uniqueness of $U^*$.

    This completes the proof.
\end{proof}

By \Cref{lem:NPhard1,lem:conv}, \pauvc is NP-hard.
\begin{theorem}\label{thm:NPC}
    {\rm PAU-VC} for bipartite graphs is NP-complete under the \textsc{Mixed} and \textsc{Exclude} models.
\end{theorem}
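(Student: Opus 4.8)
The plan is to assemble the ingredients already developed in this section. By \Cref{thm:mix-and-exclude}, a graph has a feasible pre-assignment of size at most $k$ under \textsc{Mixed} if and only if it does under \textsc{Exclude}, so it suffices to prove NP-completeness under the \textsc{Mixed} model and transfer the result to \textsc{Exclude}. For membership in NP, I would invoke the observation made just above \Cref{obs:bipartite:endvertex}: on bipartite graphs a minimum vertex cover is computable in polynomial time via K\"onig's theorem and the Hopcroft--Karp algorithm, so \Cref{cor:unique_alg} lets us verify in polynomial time whether a given pair $(\tilde{U}_{\rm in},\tilde{U}_{\rm ex})$ is a feasible pre-assignment. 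A nondeterministic machine then guesses a pre-assignment of size at most $k$ and checks it, placing \pauvc on bipartite graphs in NP.

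For NP-hardness I would use the polynomial-time reduction from \textsc{Independent Dominating Set} on bipartite graphs (NP-complete by~\cite{CorneilEt:IDS}) to \pauvc on bipartite graphs given by the construction of $G'$: attach a pendant vertex $v'$ to every $v\in V(G)$. Then $G'$ is bipartite, the new edges $E'$ form a perfect matching of $G'$, hence $\tau(G')=|V(G)|$ and \Cref{obs:bipartite:endvertex} holds. The correctness of the reduction is exactly the combination of \Cref{lem:NPhard1} (an independent dominating set $D$ of size $\le k$ in $G$ yields the feasible pre-assignment $(\emptyset,D)$ of size $\le k$ in $G'$) and \Cref{lem:conv} (a feasible pre-assignment of size $\le k$ in $G'$ yields an independent dominating set of size $\le k$ in $G$, after first normalizing via \Cref{cor:bipartite:canonical} so that $\tilde{U}_{\rm in}\cup\tilde{U}_{\rm ex}\subseteq V(G')\setminus V'$ and then extending $\tilde{U}_{\rm ex}$ to a maximal independent set of $G$). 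Since the map $(G,k)\mapsto(G',k)$ is clearly polynomial, this establishes NP-hardness under \textsc{Mixed}, and \Cref{thm:mix-and-exclude} carries it over to \textsc{Exclude}.

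The substantive difficulty is concentrated in \Cref{lem:conv}, and specifically in the covering identity $V(G)=\tilde{U}_{\rm in}\cup\tilde{U}_{\rm ex}\cup N_G(\tilde{U}_{\rm ex})$: one must argue that if some $v\in V(G)$ escaped this union, then toggling $v$ against its pendant $v'$ in $U^*$ would produce a second minimum vertex cover respecting the pre-assignment, contradicting uniqueness. Handling the two cases $v\notin U^*$ and $v\in U^*$ (the latter requiring the further argument that $N_G(v)\subseteq U^*$) is the delicate part, but it has already been carried out in the proof of \Cref{lem:conv}, so the proof of \Cref{thm:NPC} itself reduces to the bookkeeping above.

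\begin{proof}
By \Cref{thm:mix-and-exclude} it suffices to prove the claim for the \textsc{Mixed} model. Membership in NP follows from \Cref{cor:unique_alg} together with the fact that a minimum vertex cover of a bipartite graph is computable in polynomial time (K\"onig's theorem and the Hopcroft--Karp algorithm~\cite{HopcroftK73}): a nondeterministic algorithm guesses a pre-assignment of size at most $k$ and verifies feasibility in polynomial time. Hardness follows from \Cref{lem:NPhard1} and \Cref{lem:conv}, which together show that the bipartite graph $G$ has an independent dominating set of size at most $k$ if and only if the (bipartite) graph $G'$ obtained by attaching a pendant vertex to each vertex of $G$ has a feasible pre-assignment of size at most $k$ under the \textsc{Mixed} model. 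Since $G'$ is constructible from $G$ in polynomial time and \textsc{Independent Dominating Set} is NP-complete on bipartite graphs~\cite{CorneilEt:IDS}, \pauvc on bipartite graphs is NP-hard under \textsc{Mixed}, and hence, by \Cref{thm:mix-and-exclude}, also under \textsc{Exclude}.
\end{proof}
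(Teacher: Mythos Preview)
Your proposal is correct and follows essentially the same approach as the paper: membership in NP via \Cref{cor:unique_alg} and polynomial-time minimum vertex cover on bipartite graphs, and NP-hardness via the pendant-vertex reduction from \textsc{Independent Dominating Set} on bipartite graphs, with correctness supplied by \Cref{lem:NPhard1} and \Cref{lem:conv}; the transfer to \textsc{Exclude} via \Cref{thm:mix-and-exclude} is exactly how the paper handles the two models together.
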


By modifying the proof of Theorem \ref{thm:NPC}, we can show that the hardness also holds for \textsc{Include} model. 
As shown in~\Cref{lem:NPhard1,lem:conv}, $G$ has an independent dominating set of size at most $k$ if and only if $G'$ has a feasible pre-assignment $(\tilde{U}_{\rm in}, \tilde{U}_{\rm ex})$ of size at most $k$ (under the \textsc{Mixed} model).
From this pre-assignment, we define
\begin{align*}
    \tilde{U} \coloneqq & \tilde{U}_{\rm in} \cup\{v \in V(G) : v' \in \tilde{U}_{\rm ex} \cap V'\} \cup \{v' \in V' : v \in \tilde{U}_{\rm ex} \cap V(G)\}.
\end{align*}
Let $U^*$ be a minimum vertex cover of $G'$ such that $\tilde{U}_{\rm in} \subseteq U^* \subseteq V(G') \setminus \tilde{U}_{\rm ex}$.
By~\Cref{obs:bipartite:endvertex}, all vertices in $\{v \in V(G) : v' \in \tilde{U}_{\rm ex} \cap V'\}$ and all vertices in $\{v' \in V' : v \in \tilde{U}_{\rm ex} \cap V(G)\}$ belong to $U^*$.
Thus, $\tilde{U} \subseteq U^*$.
Suppose that there is another minimum vertex cover $U$ of $G'$ such that $\tilde{U} \subseteq U$.
Again, by~\Cref{obs:bipartite:endvertex}, all vertices in $\tilde{U}_{\rm ex}$ do not belong to $U$.
Thus, we have $\tilde{U}_{\rm in} \subseteq U \subseteq V(G') \setminus \tilde{U}_{\rm ex}$, contradicting the uniqueness of $U^*$.

As a feasible pre-assignment under the \textsc{Include} mode can be seen as a feasible pre-assignment on the \textsc{Mixed} model (as we have seen in~\Cref{thm:mix-and-exclude}), we can conclude that $G$ has an independent dominating set of size at most $k$ if and only if $G'$ has a feasible pre-assignment of size at most $k$ under the \textsc{Include} model.

\begin{theorem}\label{thm:NPC2}
    {\rm PAU-VC} for bipartite graphs is NP-complete under the \textsc{Include} model.
\end{theorem}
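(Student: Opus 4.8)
The plan is to mirror the proof of \Cref{thm:NPC} almost verbatim, changing only the way a pre-assignment encodes forbidden vertices. Membership in NP is identical to the \textsc{Mixed}/\textsc{Exclude} case: \textsc{Vertex Cover} on bipartite graphs is polynomial-time solvable by K\"onig's theorem together with the Hopcroft--Karp algorithm, so \Cref{cor:unique_alg} lets us verify feasibility of a guessed \textsc{Include} pre-assignment of size at most $k$ in polynomial time. For hardness I would again reduce from \textsc{Independent Dominating Set} on bipartite graphs~\cite{CorneilEt:IDS}, reusing the same auxiliary graph $G'$ obtained by attaching a pendant vertex $v'$ to each $v \in V(G)$, so that $E' \coloneqq E(G') \setminus E(G)$ is a perfect matching, $\tau(G') = |V(G)|$, and \Cref{obs:bipartite:endvertex} applies: every minimum vertex cover of $G'$ contains exactly one endpoint of each edge of $E'$.

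The one genuinely new ingredient is that the \textsc{Include} model cannot name a forbidden vertex directly. The fix exploits \Cref{obs:bipartite:endvertex}: forcing $v \notin U^*$ is equivalent to forcing $v' \in U^*$, and forcing $v' \notin U^*$ is equivalent to forcing $v \in U^*$. So, starting from a feasible \textsc{Mixed} pre-assignment $(\tilde{U}_{\rm in}, \tilde{U}_{\rm ex})$ for $G'$ of size at most $k$ --- which exists precisely when $G$ has an independent dominating set of size at most $k$, by \Cref{lem:NPhard1,lem:conv} --- I would set
\[
    \tilde{U} \coloneqq \tilde{U}_{\rm in} \cup \{v \in V(G) : v' \in \tilde{U}_{\rm ex} \cap V'\} \cup \{v' \in V' : v \in \tilde{U}_{\rm ex} \cap V(G)\}.
\]
Since $\tilde{U}_{\rm ex} \cap V'$ and $\tilde{U}_{\rm ex} \cap V(G)$ partition $\tilde{U}_{\rm ex}$ and $v \mapsto v'$ is injective, $|\tilde{U}| \le |\tilde{U}_{\rm in}| + |\tilde{U}_{\rm ex}| \le k$. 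Letting $U^*$ be the unique minimum vertex cover of $G'$ with $\tilde{U}_{\rm in} \subseteq U^* \subseteq V(G') \setminus \tilde{U}_{\rm ex}$, \Cref{obs:bipartite:endvertex} forces every swapped-in vertex into $U^*$, so $\tilde{U} \subseteq U^*$; and any minimum vertex cover $U$ with $\tilde{U} \subseteq U$ must, by the same observation, avoid every vertex of $\tilde{U}_{\rm ex}$, hence $\tilde{U}_{\rm in} \subseteq U \subseteq V(G') \setminus \tilde{U}_{\rm ex}$ and $U = U^*$ by uniqueness. Thus $\tilde{U}$ is a feasible \textsc{Include} pre-assignment for $G'$.

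For the converse direction, a feasible \textsc{Include} pre-assignment $\tilde{U}$ of size at most $k$ is, by unwinding the definitions, exactly the feasible \textsc{Mixed} pre-assignment $(\tilde{U}, \emptyset)$ of the same size (as observed in \Cref{thm:mix-and-exclude}), so \Cref{lem:conv} returns an independent dominating set of $G$ of size at most $k$. Combining both directions with the NP-completeness of \textsc{Independent Dominating Set} on bipartite graphs completes the proof. The only point requiring care is the size bookkeeping in the display above --- checking that the swapped copies of $\tilde{U}_{\rm ex}$ neither inflate the count nor collide with $\tilde{U}_{\rm in}$ --- which becomes immediate once one recalls from \Cref{cor:bipartite:canonical} that $(\tilde{U}_{\rm in}, \tilde{U}_{\rm ex})$ may be taken with $\tilde{U}_{\rm in} \cup \tilde{U}_{\rm ex} \subseteq V(G)$; everything else is a direct transcription of \Cref{obs:bipartite:endvertex} and the already-proved \Cref{lem:NPhard1,lem:conv}.
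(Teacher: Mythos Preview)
Your proposal is correct and essentially identical to the paper's own proof: the same reduction from \textsc{Independent Dominating Set} on bipartite graphs via the pendant-augmented graph $G'$, the same definition of $\tilde{U}$ by swapping each excluded vertex with its matched pendant partner using \Cref{obs:bipartite:endvertex}, and the same converse via the observation that an \textsc{Include} pre-assignment is a \textsc{Mixed} one. Your write-up even adds a little more care on the size bookkeeping (via \Cref{cor:bipartite:canonical}) than the paper does explicitly.
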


\section{Exact Algorithms}

\subsection{General graphs}
In this section, we present exact algorithms for \pauvc for general graphs. 
We first see exact exponential-time algorithms, and then see FPT algorithms for vertex cover number. 
As shown in \Cref{thm:mix-and-exclude}, we only consider \textsc{Include} and \textsc{Exclude} models.

\subsubsection{Exponential-time algorithms}
We first present an exact algorithm that utilizes an algorithm for \textsc{Unique Vertex Cover}~(\textsc{UVC}). 
Let $G$ be a graph with $n$ vertices.
We first fix a subset of vertices $U \subseteq V(G)$ for a pre-assignment, and then check if $G$ has a unique minimum vertex cover under the pre-assignment $U$.
This can be done by transforming $G$ into $G'$ as in~\Cref{cor:unique_alg}: $G' \coloneqq G - U$ for the \textsc{Include} model and $G' \coloneqq G - U - N(U)$ for the \textsc{Exclude} model.
By applying this procedure for all subsets $U$, we can determine the answer of \pauvc for $G$. 
\begin{algorithm}[H]
\begin{algorithmic}
\caption{Algorithm using \textsc{UVC} routine}
\label{alg:simple2}
\For{$k=0,1,\ldots,n$}
\ForAll{$U\subseteq V$ with $|U|=k$}
    \State $G' \gets$ the graph obtained from $G$ under pre-assignment on $U$
    \If{UVC($G'$) = true} \label{alg:line1}
    \State \textbf{Return} $U$
    \EndIf
\EndFor
\EndFor 
\end{algorithmic}
\end{algorithm}
The running time depends on the algorithm to solve \textsc{UVC}. We abuse the notation UVC($G'$) in \Cref{alg:simple2}: UVC($G'$) returns true if and only if $G'$ has a unique minimum vertex cover of size $\tau(G) - |U|$ for the \textsc{Include Model} and size $\tau(G) - |N(U)|$ for the \textsc{Exclude} model.
By \Cref{lem:unique_alg}, \textsc{UVC} can be solved in the same exponential order of running time as \textsc{Vertex Cover}. Here, let $O^*(\alpha^n)$ be the running time of an exact exponential-time algorithm for \textsc{Vertex Cover}. Note that Algorithm \ref{alg:simple2} applies the \textsc{Vertex Cover} algorithm for $G'$, which has at most $n-|U|$ vertices in any pre-assignment model. Thus, the total running time of Algorithm \ref{alg:simple2} is estimated as 
\[
\sum_{k=0}^n \binom{n}{k} O^*(\alpha^{n-k}) = \sum_{k=0}^n \binom{n}{k} O^*(\alpha^{k}) = O^*({(\alpha+1)}^{n}), 
\]
by the binomial theorem. Since the current fastest exact exponential-time algorithm for \textsc{Vertex Cover} runs in $O(1.1996^n)$ time~\cite{XiaoNagamochi2017}, Algorithm \ref{alg:simple2} runs in $O(2.1996^n)$ time. 

\begin{theorem}
  {\rm PAU-VC} for any model can be solved in $O(2.1996^n)$ time.    
\end{theorem}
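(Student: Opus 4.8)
The plan is to reduce \pauvc\ to \textsc{Unique Vertex Cover} by a brute-force enumeration over all candidate pre-assignments, and then invoke a known exponential-time algorithm for \textsc{Vertex Cover} together with \Cref{lem:unique_alg} to resolve each \textsc{UVC} subcall. Since \Cref{thm:mix-and-exclude} lets us restrict attention to the \textsc{Include} and \textsc{Exclude} models, it suffices to handle these two cases.

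First I would observe that a pre-assignment is a subset $U \subseteq V(G)$, so there are only $\sum_{k=0}^{n}\binom{n}{k} = 2^n$ of them. For a fixed $U$, checking feasibility reduces to a \textsc{UVC} instance on a smaller graph: for the \textsc{Include} model take $G' = G - U$ and ask whether $G'$ has a unique minimum vertex cover of size $\tau(G) - |U|$; for the \textsc{Exclude} model take $G' = G - U - N(U)$ and ask whether $G'$ has a unique minimum vertex cover of size $\tau(G) - |N(U)|$. This is exactly the transformation used in \Cref{cor:unique_alg}, and it is correct because the graph class of all graphs is hereditary, so the sub-instances stay within scope. Procedure~\ref{alg:simple2} organizes this enumeration, returning a feasible $U$ of minimum size if one exists.

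Next I would bound the running time. Each \textsc{UVC} call on $G'$ costs $O(\tau(G') \cdot T(n', m'))$ by \Cref{lem:unique_alg}, where $T$ is the running time of a \textsc{Vertex Cover} algorithm; writing $O^*(\alpha^n)$ for that running time with $\alpha = 1.1996$ via the Xiao--Nagamochi algorithm~\cite{XiaoNagamochi2017}, and noting that $G'$ has at most $n - |U|$ vertices in either model, the total cost is
\[
\sum_{k=0}^{n}\binom{n}{k} O^*(\alpha^{n-k}) = \sum_{k=0}^{n}\binom{n}{k} O^*(\alpha^{k}) = O^*\bigl((\alpha+1)^n\bigr)
\]
by the binomial theorem (the first equality is just reindexing $k \mapsto n-k$). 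With $\alpha = 1.1996$ this is $O(2.1996^n)$, absorbing the polynomial $\tau(G')$ factor into the $O^*$. I would note explicitly that the same bound holds for \textsc{Mixed} by \Cref{thm:mix-and-exclude}, completing all three models.

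The only subtle point—hardly an obstacle, but worth stating carefully—is that in the \textsc{Exclude} model the target vertex cover size of $G'$ is $\tau(G) - |N(U)|$ rather than $\tau(G) - |U|$, and one must confirm (as in \Cref{cor:unique_alg}) that $G$ has a minimum vertex cover disjoint from $U$ and containing no vertex of $U$ iff $G'$ has a vertex cover of that reduced size; if $U$ is not independent this is handled by declaring the pre-assignment infeasible. Everything else is routine bookkeeping on top of \Cref{lem:unique_alg} and the binomial identity.
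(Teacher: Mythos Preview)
Your proposal is correct and matches the paper's own argument essentially line for line: enumerate all subsets $U$, reduce each to a \textsc{UVC} instance on $G'$ via \Cref{cor:unique_alg}, invoke \Cref{lem:unique_alg} together with the $O^*(1.1996^n)$ Xiao--Nagamochi algorithm, and sum $\sum_k \binom{n}{k}\alpha^{n-k}=(\alpha+1)^n$ by the binomial theorem. The paper presents exactly this, including the same case split between \textsc{Include} and \textsc{Exclude} and the same target sizes $\tau(G)-|U|$ and $\tau(G)-|N(U)|$.
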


\subsubsection{FPT algorithms parameterized by vertex cover number}
We now present FPT algorithms parameterized by vertex cover number. 
Throughout this subsection, we simply write $\tau \coloneqq \tau(G)$.
Contrary to the previous subsection, algorithms for \textsc{Include} and \textsc{Exclude} work differently. 
\subsection{\textsc{Include} Model}
We first present the algorithm for \textsc{Include}, which is easier to understand. 
The idea of the algorithm is as follows. Let $G$ be a graph. We first enumerate all the minimum vertex covers of $G$. 
Then, for every optimal vertex cover $U^*$, we find a minimum feasible pre-assignment. Namely, we fix a subset $U\subseteq U^*$ as \textsc{Include} vertices, and then check the uniqueness under $U$ as Algorithm \ref{alg:simple2}. We formally describe the algorithm in Algorithm~\ref{alg:fptIn}.  

\begin{algorithm}[H]
\begin{algorithmic}[1]
\caption{FPT algorithm for \textsc{Include}}
\label{alg:fptIn}
\State Enumerate all the optimal vertex covers of $G$, and let $\mathcal{U}$ be the collection of them.   
\ForAll{$U^*\in \mathcal{U}$}
\State $\mathrm{Sol}(U^*) \gets U^*$ \Comment{Current min. pre-asgmt. for $U^*$}
\ForAll{$U\subseteq U^*$ with $|U|< \tau$}
    \State $G' \gets$ the graph obtained from $G$ under pre-assignment on $U$
    \If{$|U| < |\mathrm{Sol}(U^*)|$ and UVC($G'$) = true} 
    \State $\mathrm{Sol}(U^*)\gets U$
    \EndIf
    \EndFor
\EndFor 
\State \textbf{Return} $\mathrm{Sol}(U^*)$ with the minimum size for $U^* \in \mathcal{U}$
\end{algorithmic}
\end{algorithm}
Inside of the algorithm, we invoke an algorithm for computing a minimum vertex cover of $G'$ parameterized by vertex cover number $\tau(G')$, and $O^*(\beta^{\tau(G')})$ denotes the running time. 
Since $G'$ has a vertex cover at most $\tau-|U|$, the running time of Algorithm \ref{alg:fptIn} is estimated as follows: 
\begin{align}
\begin{aligned}
\sum_{U^*\in \mathcal{U}} \sum_{U \subseteq U^*} O^*(\beta^{\tau-|U|}) &= |\mathcal{U}|\sum_{k=0}^{\tau}\binom{\tau}{k} O^*(\beta^{\tau-k}) =  O^*(|\mathcal{U}|{(\beta+1)}^{\tau}). \label{eq:eq1}
\end{aligned}
\end{align}
    
Thus bounding $|\mathcal{U}|$ by a function of $\tau$ is essential. 
Actually, $|\mathcal{U}|$ is at most $2^{\tau}$ by the following reason. 
For $\mathcal{U}$, we consider classifying the optimal vertex covers of $G$ into two categories for a non-isolated vertex $v \in V(G)$: (1) $\mathcal{U}(v)$ is the set of optimal vertex covers of $G$ containing $v$, and (2) $\mathcal{U}({\bar{v}})$ is the set of optimal vertex covers of $G$ not containing $v$. Note that any vertex cover in $\mathcal{U}({\bar{v}})$ contains all the vertices in $N(v)$ by the requirement of a vertex cover; a vertex cover in $\mathcal{U}({v})$ (resp., $\mathcal{U}(\bar{v})$) reserves $v$ (resp., $N(v)$) for vertex cover and can include at most $\tau-1$ (resp., $\tau-|N(v)|$) vertices. We can further classify $\mathcal{U}({v})$ (resp., $\mathcal{U}(\bar{v})$) into $\mathcal{U}(vv')$ and $\mathcal{U}(v\bar{v'})$ (resp., $\mathcal{U}(\bar{v}v')$ and $\mathcal{U}(\bar{v}\bar{v'})$) similarly, which gives a branching with depth at most $\tau$. Since every optimal vertex cover is classified into a leaf of the branching tree, and thus $|\mathcal{U}|\le 2^{\tau}$ holds; the above running time is $O^*({(2\beta+2)}^{\tau})$.  

Although it is impossible to improve bound $2^{\tau}$ on the number of minimum vertex covers of $G$,\footnote{If $G$ is a disjoint union of $n/2$ isolated edges, $G$ has exactly $2^{\tau}$ minimum vertex covers.} we can still improve the running time. 
The idea is to adopt a smaller set of optimal vertex covers instead of the whole set of the optimal vertex covers. Here, we focus on $\mathcal{U}(\bv)$, where $\bv$ is a sequence of symbols representing a vertex or its negation, as discussed above. Since $\bv$ has a history of branching, it reserves a set of vertices as a part of an optimal vertex cover in $\mathcal{U}(\bv)$. Let $U(\bv)$ denote the corresponding vertex set (i.e., the vertices appearing positively $\bv$). Then, $\tau-|U(\bv)|$ vertices are needed to cover the edges in $G - U(\bv)$. 

We now assume that $G - U(\bv)$ forms a collection of isolated edges. In such a case, we can exploit the structure of an optimal vertex cover without branching to leaves by the following argument. To uniquify a minimum vertex cover of $G$, we need to specify exactly one of two endpoints of each isolated edge. In Algorithm \ref{alg:fptIn}, we fix a target vertex cover $U^*$ at line 2, choose a subset of $U^*$ for pre-assignment \textsc{Include} at line 4, and check if it ensures the uniqueness of an optimal vertex cover. Instead of fixing a target vertex cover $U^*$, we focus on $U(\bv)$ with $k'$ isolated edges, where $|U(\bv)|+k'=\tau$ holds (as otherwise $G$ has no vertex cover extending $U(\bv)$ of size $\tau$). 
%
That is, a pre-assignment with size $k~(\ge k')$ on $U(\bv)$ with $k'$ isolated edges must form $k'$ end points of the isolated edges (i.e., one of $2^{k'}$ choices) and a $(k-k')$-subset of $U(\bv)$ (one of $\binom{|U(\bv)|}{k-k'}$). Thus, for all the pre-assignments on $U(\bv)$ plus $k'$ isolated edges, the uniqueness check takes  
\begin{align*}
 & 2^{k'}\sum_{k''=0}^{\tau-k'}\binom{\tau-k'}{k''} \cdot O^*(\beta^{\tau-k'-k''})=2^{k'} \cdot O^*((\beta+1)^{\tau-k'}) = O^*((\beta+1)^{\tau}).
\end{align*}

To obtain such $G - U(\bv)$ (i.e., a graph consisting of isolated edges), we do branching by $v$ with a degree at least $2$.  
If no vertex with a degree at least $2$ is left, then $G - U(\bv)$ forms a collection of isolated edges. We now estimate the size $f(\tau)$ of a branching tree. 
If $v$ with a degree at least $2$ is in the vertex cover, the remaining graph has a vertex cover with at least $\tau-1$ vertices. If $v$ is not in the vertex cover, $N(v)$ should be included in a vertex cover; the remaining graph has a vertex cover with at least $\tau-2$.  
Thus we have
\[
 f(\tau) \le f(\tau-1) + f(\tau-2). 
\]
This recurrence inequality leads to $f(\tau)\le \phi^\tau$, where $\phi=(1+\sqrt{5})/2$ is the golden ratio.  Overall, the running time of the revised Algorithm \ref{alg:fptIn} becomes $O^*({(\phi(\beta+1))}^{\tau})$. 
Since the current fastest vertex cover algorithm runs in $O^*(1.2738^{\tau})$ time~\cite{ChenKX10:TCS:Improved},  
we can solve \pauvc under \textsc{Include} model in $O^*(3.6791^{\tau})$ time. 
\begin{theorem}
  {\rm PAU-VC} under \textsc{Include} model can be solved in $O^*(3.6791^{\tau})$ time.    
\end{theorem}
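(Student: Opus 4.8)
The plan is to follow the structure already laid out in the discussion preceding the theorem statement, making the revised version of Algorithm~\ref{alg:fptIn} explicit and then bounding its running time carefully. First I would replace the brute-force enumeration of \emph{all} optimal vertex covers (which costs a factor of $2^\tau$) by a branching procedure on the graph: as long as $G$ contains a vertex $v$ of degree at least $2$ that has not yet been resolved, branch into two cases, (i) $v$ is in the cover, deleting $v$ and decreasing the target size by $1$, or (ii) $v$ is not in the cover, forcing $N(v)$ into the cover, deleting $N(v)\cup\{v\}$ and decreasing the target size by $|N(v)|\ge 2$. Each recursive call records the set $U(\bv)$ of vertices committed to the cover so far. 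The branching stops at a node $\bv$ where the residual graph $G - U(\bv)$ has maximum degree at most $1$, i.e.\ it is a disjoint union of $k'$ isolated edges plus isolated vertices; we discard any branch in which $|U(\bv)| + k' \neq \tau$, since then $U(\bv)$ cannot extend to an optimal vertex cover.

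Next I would argue correctness: every optimal vertex cover $U^*$ of $G$ is ``captured'' by exactly one leaf $\bv$ in the sense that $U(\bv)\subseteq U^*$ and $U^*\setminus U(\bv)$ consists of one endpoint from each of the $k'$ isolated edges of $G - U(\bv)$. Hence a feasible \textsc{Include} pre-assignment contained in $U^*$ of size $k$ decomposes as a $(k - k')$-subset of $U(\bv)$ together with a choice of one endpoint for each of the $k'$ isolated edges, which is one of $2^{k'}$ choices for the edge part and $\binom{|U(\bv)|}{k-k'}$ for the rest. For each such candidate pre-assignment we form the reduced graph $G'$ as in Algorithm~\ref{alg:simple2}/\Cref{cor:unique_alg} and run the \textsc{UVC} check, which by \Cref{lem:unique_alg} reduces to $O^*(1)$ many vertex-cover computations on graphs with vertex cover number at most $\tau - |U|$; using the $O^*(1.2738^\tau)$ algorithm of~\cite{ChenKX10:TCS:Improved} this costs $O^*(\beta^{\tau - |U|})$ with $\beta = 1.2738$. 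Returning the smallest feasible pre-assignment over all candidates and all leaves is correct by the bijection between $(3n+2m)$-type optimal covers and pre-assignments already established, specialized here to the simple structure of $G - U(\bv)$.

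For the running time I would combine the two estimates already sketched. At a single leaf $\bv$, summing over all $k'$-edge endpoint selections and all subset sizes $k''$ of $U(\bv)$ gives, by the binomial theorem,
\[
2^{k'}\sum_{k''=0}^{\tau - k'}\binom{\tau - k'}{k''} O^*(\beta^{\tau - k' - k''}) = 2^{k'} \cdot O^*((\beta + 1)^{\tau - k'}) = O^*((\beta + 1)^{\tau}),
\]
so each leaf contributes $O^*((\beta+1)^\tau)$ independently of $k'$. The number of leaves of the branching tree is bounded by $f(\tau)$ where $f(\tau) \le f(\tau - 1) + f(\tau - 2)$, whose solution is $f(\tau) \le \phi^\tau$ for $\phi = (1+\sqrt5)/2$. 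Multiplying, the total is $O^*((\phi(\beta+1))^\tau)$, and plugging in $\beta = 1.2738$ gives $\phi(\beta+1) = \phi \cdot 2.2738 < 3.6791$, establishing the claimed $O^*(3.6791^\tau)$ bound.

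The main obstacle I expect is the correctness argument for the branching replacing full enumeration --- specifically verifying that restricting attention to the pre-assignments of the form ``$(k-k')$-subset of $U(\bv)$ plus one endpoint per isolated edge'' loses no feasible solution, and that the uniqueness check on $G'$ behaves correctly when $U(\bv)$ is only a \emph{partial} commitment rather than a full target cover $U^*$. One has to be careful that a minimum feasible \textsc{Include} pre-assignment need not be a subset of a \emph{single} optimal cover in an obvious way before we fix the leaf; the key point is that any feasible pre-assignment $\tilde U$ is by definition contained in the unique optimal cover it uniquifies, and that cover is captured by exactly one leaf, so iterating over leaves and over the structured candidates there is exhaustive. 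A secondary subtlety is bounding the branching-tree leaf count: one must ensure the measure drops by at least the claimed amount in each branch (in particular that branch (ii) always removes at least two vertices, which holds because we only branch on vertices of degree $\ge 2$), and that vertices which become degree-$\le 1$ are simply left for the base case rather than branched on.
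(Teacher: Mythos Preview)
Your proposal is correct and follows essentially the same approach as the paper: branch on vertices of degree at least $2$ until the residual graph is a matching, handle each leaf $\bv$ by enumerating the $2^{k'}$ endpoint choices together with subsets of $U(\bv)$, and combine the per-leaf cost $O^*((\beta+1)^\tau)$ with the Fibonacci-type leaf bound $\phi^\tau$. The one stray remark about ``the bijection between $(3n+2m)$-type optimal covers'' refers to the hardness reduction in Section~\ref{sec:complexity:general} and is irrelevant here; the correctness argument you actually need (and give) is simply that any feasible pre-assignment lies inside its unique target cover $U^*$, that $U^*$ is reached at some leaf $\bv$, and that feasibility forces one endpoint from each isolated edge of $G - U(\bv)$ to be selected.
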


\subsection{\textsc{Exclude} Model}
We next consider \textsc{Exclude} model. A straightforward extension of Algorithm \ref{alg:fptIn} changes line 4 as:
\begin{equation}\label{change}
\textbf{ for all }  \tilde{U} \subseteq V\setminus U^*  \text{ with } |\tilde{U}|=k \textbf{ do }
\end{equation}
In line 5, $G'$ is defined from $\tilde{U}$ as in the \textsc{Exclude} model.
Although this change still guarantees that the algorithm works correctly, 
it affects the running time. Equation (\ref{eq:eq1}) becomes 
\[
|\mathcal{U}|\sum_{k=0}^{n-\tau} \binom{n-\tau}{k} O^*(\beta^{\tau-k}) = O^*(|\mathcal{U}| \beta^{\tau} (1+1/\beta)^{n-\tau}), 
\]
which is no longer to be fixed parameter tractable with respect to $\tau$. To circumvent the problem, we show the following lemma. 
\begin{lemma}\label{lem:inout}
Suppose two subsets $\tilde{U}_1$ and $\tilde{U}_2$ of $V(G)$ have the same neighbors, i.e., $N(\tilde{U}_1)=N(\tilde{U}_2)$. If a set $U\subseteq V\setminus \tilde{U}_1$ is an optimal vertex cover of $G$, $U\cap \tilde{U}_2=\emptyset$ holds. 
\end{lemma}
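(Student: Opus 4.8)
The plan is to argue by contradiction, exploiting minimality of $U$ together with the hypothesis $N(\tilde{U}_1) = N(\tilde{U}_2)$. Suppose $U$ is an optimal vertex cover of $G$ with $U \cap \tilde{U}_1 = \emptyset$ but $U \cap \tilde{U}_2 \neq \emptyset$; pick some $w \in U \cap \tilde{U}_2$. The goal is to produce a smaller vertex cover, contradicting optimality of $U$.

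First I would observe that since $U$ is a vertex cover and $\tilde{U}_1 \cap U = \emptyset$, every vertex of $\tilde{U}_1$ has all its neighbors in $U$, i.e. $N(\tilde{U}_1) \subseteq U$. Hence, using the hypothesis, $N(\tilde{U}_2) = N(\tilde{U}_1) \subseteq U$. Now consider the set $U' \coloneqq U \setminus \tilde{U}_2$. I claim $U'$ is still a vertex cover of $G$: any edge incident to a vertex $x \in \tilde{U}_2$ has its other endpoint in $N(x) \subseteq N(\tilde{U}_2) \subseteq U$, and that endpoint is not in $\tilde{U}_2$ unless the edge lies inside $\tilde{U}_2$ — but an edge with both endpoints in $\tilde{U}_2$ would force one endpoint into $N(\tilde{U}_2)$, which is impossible since $N(\tilde{U}_2)$ excludes $\tilde{U}_2$ by the definition $N_G(X) = \bigcup_{v\in X} N_G(v) \setminus X$. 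So such an internal edge cannot exist (equivalently, one should note $\tilde{U}_2$ must be independent here, which follows from $N(\tilde{U}_2) = N(\tilde{U}_1) \subseteq U$ and $\tilde{U}_2 \cap U = $ ... wait, that's not given). Let me instead handle the internal-edge case directly: if $\{x,y\}$ is an edge with $x,y \in \tilde{U}_2$, then $y \in N(x) \setminus \tilde{U}_2$? No — $y \in \tilde{U}_2$, so $y \notin N_G(\tilde{U}_2)$. But $U$ covers $\{x,y\}$, so $x \in U$ or $y \in U$; either way $\tilde{U}_2 \cap U \neq \emptyset$, which we already have, giving no immediate contradiction. The cleaner route: I will show $\tilde{U}_2$ is independent using that $N(\tilde{U}_1)=N(\tilde{U}_2)$ and the role of $\tilde{U}_1$; actually the safest framing is that such edges, if present, are still covered by the $N$-side. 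The key point to nail down is that removing $\tilde{U}_2$ from $U$ loses coverage only of edges incident to $\tilde{U}_2$, and all those edges are covered by $N(\tilde{U}_2) \subseteq U \setminus \tilde{U}_2 = U'$ provided no edge lies wholly inside $\tilde{U}_2$.

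So the main obstacle is the possibility of an edge inside $\tilde{U}_2$. I would dispose of it by noting that if $\tilde{U}_2$ contains an edge $\{x,y\}$, then since $N(\tilde{U}_1) = N(\tilde{U}_2) \subseteq U$ and $\tilde{U}_1 \cap U = \emptyset$, the set $\tilde{U}_1$ is independent (each of its vertices has all neighbours in $U$, none in $\tilde{U}_1$), but more usefully: actually the lemma as used later only needs $\tilde{U}_2$ where one can assume $\tilde{U}_2$ independent (it plays the role of an \textsc{Exclude} pre-assignment, hence independent by \Cref{cor:unique_alg}). I would therefore either add the standing assumption that $\tilde{U}_1, \tilde{U}_2$ are independent, or argue: if $\{x,y\} \subseteq \tilde{U}_2$ is an edge, consider that $U$ covers it, so WLOG $x \in U \cap \tilde{U}_2$, and then since $y \in N(x) \subseteq N(\tilde{U}_2) = N(\tilde{U}_1)$, we'd need $y \notin \tilde{U}_2$, contradiction — wait, $y \in N(\tilde{U}_2)$ means $y \in \bigcup N(v) \setminus \tilde{U}_2$, so $y \notin \tilde{U}_2$, contradicting $y \in \tilde{U}_2$. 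Hmm, but $y \in N(x)$ does not force $y \in N(\tilde{U}_2)$ precisely because of the set-subtraction. Rather: $y \in N(x)$ and $y \notin \tilde U_2$ would give $y \in N(\tilde U_2)$; since $y \in \tilde U_2$, we just get $y \notin N(\tilde U_2)$, no contradiction. I will instead simply add independence of $\tilde{U}_2$ as part of the hypothesis (consistent with its later use), under which internal edges are excluded and the argument closes: $U' = U \setminus \tilde{U}_2 \subsetneq U$ is a vertex cover, contradicting optimality of $U$. Hence $U \cap \tilde{U}_2 = \emptyset$.

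In summary, the proof is: (i) from $U \cap \tilde{U}_1 = \emptyset$ and $U$ a vertex cover, deduce $N(\tilde{U}_1) \subseteq U$; (ii) apply $N(\tilde{U}_1) = N(\tilde{U}_2)$ to get $N(\tilde{U}_2) \subseteq U$; (iii) assuming $\tilde{U}_2$ independent, verify $U \setminus \tilde{U}_2$ is a vertex cover; (iv) if $U \cap \tilde{U}_2 \neq \emptyset$ this is strictly smaller than $U$, contradicting optimality, so $U \cap \tilde{U}_2 = \emptyset$. The only delicate point is step (iii)'s reliance on $\tilde{U}_2$ being independent, which I would make explicit.
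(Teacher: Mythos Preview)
Your approach is essentially the paper's: from $U\cap\tilde U_1=\emptyset$ deduce $N(\tilde U_1)\subseteq U$, transfer this to $N(\tilde U_2)\subseteq U$, then argue that $U\setminus\tilde U_2$ is still a vertex cover, contradicting minimality if $U\cap\tilde U_2\neq\emptyset$. The paper compresses this to three lines and simply asserts that $N(\tilde U_2)$ ``covers also the edges incident with $\tilde U_2$'', without addressing the internal-edge case you flag.

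Your worry is well-founded: as literally stated, the lemma is false. On the path $a\!-\!b\!-\!c\!-\!d$, take $\tilde U_1=\{a\}$ and $\tilde U_2=\{c,d\}$; then $N(\tilde U_1)=N(\tilde U_2)=\{b\}$, yet $U=\{b,c\}$ is a minimum vertex cover disjoint from $\tilde U_1$ with $c\in U\cap\tilde U_2$. So both the paper's statement and its proof share the gap you noticed. Your proposed fix---adding the hypothesis that $\tilde U_2$ is independent---repairs the argument cleanly (no internal edges, so step~(iii) goes through), and it is exactly what the intended application needs: the $\tilde U_i$ there are \textsc{Exclude} pre-assignments, hence independent. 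With that extra hypothesis your proof is correct and coincides with the paper's intended argument.
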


\begin{proof}
Note that $N(\tilde{U}_1)(=N(\tilde{U}_2))$ and $U_1\cup U_2$ are disjoint.  
Suppose that a set $U$ with $\tilde{U}_1\cap U=\emptyset$ is an optimal vertex cover of $G$. 
Then, $N(\tilde{U}_1) \subseteq U$ holds, because otherwise the edges incident with $\tilde{U}_1$ are never covered. 
Since $N(\tilde{U}_1)~(=N(\tilde{U}_2))$ covers also the edges incident with $\tilde{U}_2$, $U\setminus \tilde{U}_2~(\supseteq N(\tilde{U}_2))$ is also a vertex cover. By the optimality of $U$, $U\setminus \tilde{U}_2$ must be $U$, which implies $U\cap \tilde{U}_2=\emptyset$. 
\end{proof}

Lemma \ref{lem:inout} implies that for two subsets $\tilde{U}_1$ and $\tilde{U}_2$ of $V(G)$ with $N(\tilde{U}_1)=N(\tilde{U}_2)$, excluding $\tilde{U}_1$ and excluding $\tilde{U}_2$ have the same effect as including $N(\tilde{U}_1)~(=N(\tilde{U}_2))$ for optimal vertex covers. Namely, for an optimal vertex cover $U$, the ``cost'' of including $U'~(\subseteq U)$ is interpreted as the minimum $|\tilde{U}|$ such that $U'\subseteq N(\tilde{U})$. 
The notion of costs enables us to transform a pre-assignment of \textsc{Include} into one of \textsc{Exclude}; we can utilize the same framework of Algorithm \ref{alg:fptIn} by installing a device to compute the costs. We thus consider obtaining the costs efficiently. The cost of $U'$ is the optimal value of the following set cover problem: The elements to be covered are the vertices in $U'$, and a set to cover elements is $u\in V\setminus U$, which consists of $N(u)$. By executing a standard dynamic programming algorithm 
for the set cover problem for the ground set $U$ once, 
we obtain the cost to cover every subset of $U$ in $O^*(2^{|U|})$ time (also see e.g.,~\cite{kratsch2010exact}). 
\begin{algorithm}[H]
\begin{algorithmic}[1]
\caption{Algorithm for \textsc{Set Cover}}
\label{alg:setcover}
\Require $(X,\mathcal{S})$, where $X=\{1,2,\ldots,n\}$ is a ground set (i.e., the set of elements to be covered) and $\mathcal{S}=\{S_1,S_2,\ldots,S_m\}$ is a collection of subsets of $X$. 
\Ensure For $S\subseteq X$, $\cost[S;j]$ is the minimum cardinality of a subset of $\{S_1,S_2,\ldots,S_j\}$ that covers $S\subseteq X$. If $S$ cannot be covered by $\{S_1,S_2,\ldots,S_j\}$, $\cost$ is set to $\infty$. 
\ForAll{$S\subseteq X$} 
\For{$j=1,\ldots,n$}
    \State $\cost[S;j] \gets \min\{\cost[S;j-1],\cost[S\setminus S_{j};j-1]+1\}$
\EndFor
\EndFor
\end{algorithmic}
\end{algorithm}
Note that Algorithm \ref{alg:setcover} for $(X,\mathcal{S})$ computes the optimal cost to cover $S$ (denoted $\cost[S]$) but can easily modified to compute an optimal solution $\SC[S]$ for every $S\subseteq X$ in the same running time.

Algorithm \ref{alg:fptEx} is the algorithm for \textsc{Exclude} model. It works in a similar framework to Algorithm \ref{alg:fptIn}, but it equips the device to transform \textsc{Include} pre-assignment into \textsc{Exclude} pre-assignment at line~3, which solves the set cover problem $(U^*,\{N(u) \mid u \in V(G) \setminus U^*\})$ in advance. Line~4 fixes a subset $U$ as vertices to be included in $U^*$ by pre-assigning some vertices in $V(G) \setminus U^*$ \textsc{Exclude}. To link $U$ to such vertices in $V(G) \setminus U^*$, we use the solutions of the set cover problem $(U^*,\{N(u) \mid u \in V(G) \setminus U^*\})$, whose correctness is guaranteed by Lemma \ref{lem:inout}.  The other structures are essentially equivalent to those of Algorithm \ref{alg:fptIn}.

\begin{algorithm}[H]
\begin{algorithmic}[1]
\caption{FPT algorithm for \textsc{Exclude}}
\label{alg:fptEx}
\State Enumerate all the optimal vertex covers. Let $\mathcal{U}$ be the set of them. 
\ForAll{optimal vertex cover $U^*$}
\State Apply Algorithm \ref{alg:setcover} for $(U^*,\{N(u) \mid u \in V(G) \setminus U^*\})$. 
\State $\mathrm{Sol}(U^*) \gets \SC(U^*)$
\ForAll{$U\subseteq U^*$ with $|\SC(U)|< |\mathrm{Sol}(U^*)|$}
    \State $G' \gets$ the graph obtained from $G$ under pre-assignment on $\SC(U)$ (\textsc{Exclude})
    \If{$|U| < |\mathrm{Sol}(U^*)|$ and UVC($G'$) = true} 
    \State $\mathrm{Sol}(U^*)\gets\SC(U)$
    \EndIf
    \EndFor
\EndFor 
\State \textbf{Return} $\mathrm{Sol}(U^*)$ with the minimum size for $U^* \in \mathcal{U}$
\end{algorithmic}
\end{algorithm}
The running time is estimated as follows. We apply Algorithm \ref{alg:setcover} for each $U^*$, which takes $O^*(2^{\tau})$ time. For each $U\subseteq U^*$ with $|\SC(U)|< |\mathrm{Sol}(U^*)|$, we obtain $G'$, whose vertex cover has size at most $|U^*|-|U|$. By these, we can estimate the running time as 
\begin{align*}
&\sum_{U^*\in \mathcal{U}} \left(O^*(2^{\tau})+\sum_{U\in U^*} O^*(\beta^{\tau-|U|})\right) \\
=&\ |\mathcal{U}|\left(O^*(2^{\tau})+\sum_{k=0}^{\tau} \binom{\tau}{k} O^*(\beta^{k})\right) =O^*(|\mathcal{U}|{(\beta+1)}^{\tau}), 
\end{align*}
because $\beta>1$. By $|\mathcal{U}|\le 2^{\tau}$, we can see that \pauvc under \textsc{Exclude} model can be solved in $O^*(4.5476^{\tau})$ time. 

Here, we further try to improve the running time as \textsc{Include} model.  
Instead of an optimal vertex cover $U^*$, we focus on $U(\bv)$ in the branching argument of the previous subsection, where $G - U(\bv)$ forms a collection of $k'$ edges. 
Since almost the same argument holds, lines 5$-$7 are executed at most $2^{|U|}$, where $U$ is an 
\textsc{Include} pre-assignment (i.e., an implicit \textsc{Exclude} pre-assignment) for an optimal vertex cover $U(\bv)$ plus a set of $k'$ end vertices of the isolated edges in $G - U(\bv)$. 
Thus, what we need to consider newly for \textsc{Exclude} model is how we efficiently get the set cover solution (i.e., vertices to be excluded) of every $U$ for $U(\bv)$. 

We explain how we resolve this. 
Suppose that $K(\bv)$ is the set of isolated edges in $G - U(\bv)$, and we then define the following set cover problem: The elements to be covered are the vertices in $U(\bv)$, and a set to cover elements is $u\in V(G) \setminus (U(\bv)\cup V(K(\bv)))$, which consists of $N(u)$. By applying Algorithm \ref{alg:setcover} for $(U(\bv),\{N(u)\mid u\in V(G) \setminus (U(\bv)\cup V(K(\bv)))\})$, we obtain for every $U'\subseteq U(\bv)$, $\SC'(U')$ as a minimum set to cover $U'$, where we use $\SC'$ instead of $\SC$ to distinguish. 
Let $U^*$ be an optimal vertex cover including $U(\bv)$, and let $K_1:=U^*\cap V(K(\bv))$ and $K_2:=V(K(\bv))\setminus U^*$. 
We claim that for $U\subseteq U^*$ with $N(K_2)\subseteq U$, $\SC'(U\setminus N(K_2))\cup K_2$ is an optimal solution for $U$ of the set cover problem $(U^*,\{N(u)\mid u\in V(G) \setminus U^*\})$. In fact, an optimal solution for $U$ must contain $K_2$ to cover $K_1$, which can be covered only by $K_2$. 
The remaining vertices of the optimal solution are in $V(G) \setminus (U(\bv)\cup V(K(\bv)))$ and optimally cover $U\setminus N(K_2)$; this is the requirement of $\SC'(U\setminus N(K_2))$. Thus, the solutions of the set cover problems can be obtained in the same running time. 
Thus, we can apply the same branching rule as \textsc{Include} model, which leads to Theorem \ref{thm:fptex}.  

\begin{theorem}\label{thm:fptex}
{\rm PAU-VC} under \textsc{Exclude} or \textsc{Mixed} model can be solved in $O^*(3.6791^{\tau})$ time.    
\end{theorem}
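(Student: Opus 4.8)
The plan is to dispatch the \textsc{Mixed} model via \Cref{thm:mix-and-exclude}, which makes it equivalent to \textsc{Exclude}, and then to build an algorithm for \textsc{Exclude} that merges two ingredients already developed above: the golden-ratio branching used for the \textsc{Include} model, and the set-cover device (justified by \Cref{lem:inout}) that converts an \textsc{Include}-style pre-assignment into an \textsc{Exclude}-style one. Concretely, I would first branch on any vertex $v$ of degree at least two — either putting $v$ into the cover (reserving $v$) or keeping it out (reserving $N(v)$) — so that the residual vertex cover number drops by at least $1$ or at least $2$ respectively. The recurrence $f(\tau) \le f(\tau-1) + f(\tau-2)$ bounds the number of leaves by $\phi^\tau$, and at each leaf the reserved set $U(\bv)$ satisfies that $G - U(\bv)$ is a disjoint union of $k'$ edges with $|U(\bv)| + k' = \tau$ (otherwise the branch is discarded).

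At a leaf, I would run Algorithm~\ref{alg:setcover} once on the instance whose ground set is $U(\bv)$ and whose covering sets are $N(u)$ for $u \in V(G) \setminus (U(\bv) \cup V(K(\bv)))$, where $K(\bv)$ is the set of $k'$ isolated edges; this costs $O^*(2^{|U(\bv)|}) = O^*(2^{\tau})$ and yields, for every $U' \subseteq U(\bv)$, an optimal cover $\SC'(U')$. An optimal vertex cover $U^*$ extending $U(\bv)$ is obtained by choosing, for each isolated edge, which endpoint is excluded; write $K_2 = V(K(\bv)) \setminus U^*$ and $K_1 = U^* \cap V(K(\bv))$. For a candidate \textsc{Include} set $U \subseteq U^*$ with $N(K_2) \subseteq U$, I claim the cheapest \textsc{Exclude} pre-assignment forcing $U \subseteq U^*$ is $\SC'(U \setminus N(K_2)) \cup K_2$: by \Cref{lem:inout}, any optimal cover of $U$ must include $K_2$ because only $K_2$ covers $K_1$, and the remaining elements $U \setminus N(K_2)$ must be optimally covered by vertices outside $U(\bv) \cup V(K(\bv))$, which is exactly what $\SC'$ computes. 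I would then form $G'$ by deleting the excluded set together with its neighborhood and invoke the \textsc{UVC} routine, which by \Cref{lem:unique_alg} runs in $O^*(\beta^{\tau(G')})$ with $\beta = 1.2738$ via~\cite{ChenKX10:TCS:Improved}, keeping the smallest feasible pre-assignment found.

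For the running time, each of the $\phi^\tau$ leaves costs $O^*(2^{\tau})$ for the set-cover precomputation, plus the cost of ranging over $K_2$ (with $2^{k'}$ choices) and over $U$; as in the \textsc{Include} analysis this is $2^{k'}\sum_{k''} \binom{\tau - k'}{k''}\, O^*(\beta^{\tau - k' - k''}) = 2^{k'}\, O^*((\beta+1)^{\tau-k'}) = O^*((\beta+1)^{\tau})$ by the binomial theorem together with $2 \le \beta + 1$. Hence the total is $O^*(\phi^\tau (\beta+1)^\tau) = O^*((\phi(\beta+1))^\tau)$, which evaluates to $O^*(3.6791^\tau)$ for $\beta = 1.2738$.

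The step I expect to be the main obstacle is justifying, cleanly, that the single run of Algorithm~\ref{alg:setcover} on $U(\bv)$ suffices to recover an optimal set-cover solution for every relevant $U \subseteq U^*$ in the full instance $(U^*, \{N(u) : u \in V(G) \setminus U^*\})$: one must argue that $K_1$ is coverable only by $K_2$, that an optimal solution therefore decomposes into $K_2$ plus an optimal cover of $U \setminus N(K_2)$ drawn from outside $U(\bv) \cup V(K(\bv))$, and that the side condition $N(K_2) \subseteq U$ may be imposed without loss of generality (otherwise excluding $K_2$ already forces additional vertices into the cover that the enumeration must account for). This is essentially an edge-by-edge application of the idea behind \Cref{lem:inout}; once it is in place, the rest is the same bookkeeping carried out for the \textsc{Include} model, so correctness of the uniqueness check and of the returned minimum-size pre-assignment follows as before.
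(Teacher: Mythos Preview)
Your proposal is correct and follows essentially the same route as the paper: reduce \textsc{Mixed} to \textsc{Exclude} via \Cref{thm:mix-and-exclude}, apply the golden-ratio branching to reach reserved sets $U(\bv)$ with $G-U(\bv)$ a disjoint union of edges, run Algorithm~\ref{alg:setcover} once on the ground set $U(\bv)$ with covering sets $\{N(u): u\in V(G)\setminus(U(\bv)\cup V(K(\bv)))\}$, and recover optimal \textsc{Exclude} pre-assignments via the decomposition $\SC'(U\setminus N(K_2))\cup K_2$ using exactly the argument that $K_1$ is coverable only by $K_2$. The running-time bookkeeping and the final bound $O^*((\phi(\beta+1))^\tau)=O^*(3.6791^\tau)$ match the paper's.
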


Since the vertex cover number of a bipartite graph is at most $n/2$, 
{\rm PAU-VC} for bipartite graphs under any model can be solved in $O(1.9181^{n})$ time. 

\subsection{Trees}
In this section, we give an exponential upper bound $O(1.4143^n)$ of the time complexity of \pauvc for trees. 

We first give a generic recursive algorithm for \pauvc of trees in Algorithm \ref{alg:tree}. 
Here, ``generic'' means that it describes the common actions for \textsc{Include} and \textsc{Exclude} models. 
Because of this, lines $10$ and $13$ are described in ambiguous ways, which will be given later.
The algorithm is so-called a divide and conquer algorithm. Lines from 2 to 5 handle the base cases. The following easy lemma might be useful. 
\begin{lemma}
A star with at least two leaves (i.e., $K_{1,l}$ with $l\ge 2$) has the unique optimal vertex cover. 
A star with one leaf (i.e., a single edge) has two optimal vertex covers. 
\end{lemma}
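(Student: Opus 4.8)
The plan is to argue directly from the structure of a star. Write $K_{1,l}$ with center $c$ and leaf set $L$, where $|L|=l$. First I would note that $\{c\}$ is a vertex cover, since every edge of $K_{1,l}$ is incident to $c$; as $K_{1,l}$ has at least one edge, this gives $\tau(K_{1,l})=1$. For the uniqueness claim when $l\ge 2$: if a vertex cover $C$ does not contain $c$, then for each leaf $w\in L$ the edge $\{c,w\}$ forces $w\in C$, so $|C|\ge l\ge 2>1=\tau(K_{1,l})$, and hence $C$ is not minimum. Therefore every minimum vertex cover contains $c$, and since $\{c\}$ alone already covers all edges, it is the only minimum vertex cover.

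For the single-edge case $l=1$, say the edge is $\{u,v\}$: both $\{u\}$ and $\{v\}$ are vertex covers of size $1=\tau$, and every vertex cover must contain at least one endpoint of $\{u,v\}$, so these are exactly the two minimum vertex covers. There is no genuine obstacle here; the only point requiring care is explicitly using the hypothesis $l\ge 2$ to obtain the strict inequality that rules out the all-leaves cover in the first part, and noting that without it (i.e.\ $l=1$) that cover has the same size as $\{c\}$, which is precisely what produces the second minimum vertex cover.
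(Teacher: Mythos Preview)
Your proof is correct. The paper itself does not give a proof of this lemma at all---it is stated as an ``easy lemma'' and left to the reader---so your direct argument from the structure of $K_{1,l}$ is exactly the kind of verification the paper implicitly expects, and there is nothing to compare against.
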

At line 9, choose a dividing point $v \in V(T)$ as a pre-assigned vertex, and after that, apply the algorithm to the divided components. Here, notice that the chosen vertex may not be included in any minimum feasible pre-assignment for $T$.
This is the reason why we try all possible $v$, which guarantee that at least one $v$ is correct, though it causes recursive calls for many subtrees of $T$. By using memoization, we can avoid multiple calls for one subtree; the number of subtrees appearing in the procedure dominates the running time. In the following subsections, we estimate an upper bound on the number of subtrees that may appear during the procedure in Algorithm \ref{alg:tree} for each pre-assignment model. 
\begin{algorithm}
\begin{algorithmic}[1]
\caption{Exact algorithm for trees}
\label{alg:tree}
\Function{PAU-Tree}{$T$} \Comment{Return $(\tau(T),\mathrm{opt}(T))$} 
\State Compute an optimal vertex cover of $T$ 
\If{$T$ has a unique optimal vertex cover}
\State \textbf{Return} $(\tau(T),0)$
\ElsIf{$T$ is a single edge}
\State \textbf{Return} $(1,1)$
\Else
\State $k\gets |V(T)|$
\ForAll{$v\in V(T)$} \Comment{$v$ is chosen for a pre-assignment}
\State Let $T_1,\ldots,T_j$ be the connected components of the graph obtained by pre-assigning $v$. 
\For{$i=1,\ldots,j$}
\State $(a_i,b_i) \gets$ \textsc{PAU-Tree}$(T_i)$
\EndFor
\If{$\sum_{i} a_i$ meets $\tau(T)$}
\State $k'\gets 1+\sum_{i} b_i$ \Comment{``$1$'' comes from $|r|=1$}
\If{$k'<k$}
\State $k\gets k'$
\EndIf
\EndIf
\EndFor
\State \textbf{Return} $(\tau(T),k)$
\EndIf
\EndFunction
\end{algorithmic}
\end{algorithm}


\smallskip

In the \textsc{Include} Model, we slightly change lines $9$ and $10$ in Algorithm \ref{alg:tree} as follows:
\begin{align*}
& \textbf{ for all }  v \in V(T)   \text{ with } d(v)>1 \textbf{ do } \\    
& \hspace{.65cm}\text{Let $T_1,\ldots,T_i$ be the connected components of $T - \{v\}$. }
\end{align*}

\smallskip 

This change is safe by the following lemma. 
\begin{lemma}\label{lem:star}
Let $T$ be a tree, which is not a star. Then, there exists a minimum feasible pre-assignment $\tilde{U}$ for $T$ in the \textsc{Include} model, where $\tilde{U}$ does not contain a leaf vertex. 
\end{lemma}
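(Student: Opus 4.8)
The plan is to show that whenever a tree $T$ is not a star, any feasible \textsc{Include} pre-assignment $\tilde{U}$ containing a leaf can be modified, without increasing its size, into a feasible pre-assignment that avoids that leaf; iterating this argument yields a minimum feasible pre-assignment with no leaf vertices. So the core of the proof is a single local exchange step. Let $\ell$ be a leaf in $\tilde{U}$ and let $p$ be its unique neighbor (its parent). Let $U^*$ be the unique minimum vertex cover with $\tilde{U} \subseteq U^*$. The first thing I would establish is that $p \in U^*$: indeed, since $T$ is not a star, $p$ has some neighbor $w \neq \ell$, and because $U^*$ is \emph{minimum}, not all neighbors of $\ell$ lie in $U^*$ unless forced — more directly, if $p \notin U^*$ then every neighbor of $p$, in particular, all of $N(p)\setminus\{\ell\}$ would be in $U^*$ and the edge $\{\ell,p\}$ is covered by $\ell$, but then $U^* \setminus \{\ell\}$ still covers $\{\ell,p\}$ via $p$? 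No — I need to be careful here: $\ell$ has degree $1$, so if $\ell \in U^*$ and $p \in U^*$ then $U^* \setminus \{\ell\}$ is still a vertex cover (the only edge at $\ell$ is $\{\ell,p\}$, covered by $p$), contradicting minimality. Hence $p \notin U^*$, and consequently $N(p) \subseteq U^*$, in particular $\ell \in U^*$ consistently, and every neighbor of $p$ other than $\ell$ lies in $U^*$.

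Given this, I would define the replacement $\tilde{U}' := (\tilde{U} \setminus \{\ell\}) \cup \{w\}$ for a suitably chosen $w \in N(p) \setminus \{\ell\}$ — a non-leaf candidate — or, following the pattern of \Cref{thm:include-and-exclude}, replace $\ell$ by some neighbor that is ``closer to the interior.'' Actually the cleanest route mirrors the reduction to \textsc{Exclude}: since $\ell \in U^*$ and $\ell$ is minimal-degree, $N(\ell) = \{p\} \not\subseteq U^*$ because $p \notin U^*$; so the fact ``$\ell \in U^*$'' is already forced by ``$p \notin U^*$'', which in turn is forced once we know $U^*$ excludes $p$. The key observation is that specifying $\ell \in \tilde{U}$ is equivalent, as far as uniqueness is concerned, to specifying information about $p$. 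I would therefore set $\tilde{U}' := (\tilde{U} \setminus \{\ell\}) \cup \{q\}$ where $q$ is any neighbor of $p$ other than $\ell$ (such $q$ exists since $T$ is not a star and $p$ is not a leaf — here I use that if $p$ were a leaf too then $T$ would be a single edge, excluded). Since $q \in N(p) \subseteq U^*$, we have $\tilde{U}' \subseteq U^*$, so $U^*$ is still a minimum vertex cover extending $\tilde{U}'$. For uniqueness: suppose $U \neq U^*$ is another minimum vertex cover with $\tilde{U}' \subseteq U$. If $p \notin U$, then $\ell \in U$ to cover $\{\ell,p\}$, giving $\tilde{U} = (\tilde{U}'\setminus\{q\})\cup\{\ell\} \subseteq U \cup \{\ell\}$... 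I need $\tilde{U} \subseteq U$; since $\tilde{U}\setminus\{\ell\} = \tilde{U}'\setminus\{q\} \subseteq U$ and $\ell \in U$, indeed $\tilde{U}\subseteq U$, contradicting uniqueness of $U^*$ in the original instance. If $p \in U$: then I want another contradiction — consider $U \setminus \{\ell\} \cup (\text{something})$; actually if $p \in U$ and $\ell \in U$ then $U \setminus \{\ell\}$ is a smaller vertex cover, impossible; so $\ell \notin U$, but then $U \neq U^*$ with $U$ minimum and $p \in U$, $\ell \notin U$ — I then swap: $U' := (U \setminus \{p\}) \cup \{\ell\}$ is still a vertex cover of the same size (every edge at $p$ except $\{\ell,p\}$ has its other endpoint in $U$ since $N(p)\setminus\{\ell\}\subseteq \tilde{U}' \setminus\{q\}\cup\{q\}$... wait, only $q$ is guaranteed in $\tilde{U}'$) — so this branch needs more care.

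The honest statement of the obstacle: the swap works cleanly only if I know $N(p)\setminus\{\ell\} \subseteq U$ for the competing cover $U$, which is not automatic. I expect the right fix is to not use a single $q$ but to argue more structurally — e.g., observe that in the subtree rooted appropriately, fixing $\ell \in U^*$ is redundant because $\ell$ is automatically in every minimum vertex cover that excludes $p$, and excluding $p$ can be enforced ``from the $p$-side'' by a pre-assignment on interior vertices of $T - \ell$ (this tree is still connected and, being $T$ minus a leaf, we can recurse / invoke minimality of $\tilde{U}$). I would therefore carry out the proof as: (i) handle degenerate cases ($T$ a single edge, $p$ a leaf) showing they don't arise since $T$ is not a star; (ii) show $p \notin U^*$ and $N(p) \subseteq U^*$; (iii) show that $\tilde{U} \setminus \{\ell\}$ together with $U^*$ restricted appropriately still forces uniqueness — concretely, that the only minimum vertex cover extending $\tilde{U}\setminus\{\ell\}$ and still excluding $p$ is $U^*$, and that any minimum vertex cover extending $\tilde{U}\setminus\{\ell\}$ must exclude $p$ (because including both $p$ and $\ell$ violates minimality, while including $p$ and excluding $\ell$ would leave edge $\{\ell,p\}$... no, $p$ covers it — so this last point genuinely requires choosing the replacement vertex among $N(p)$ on the far side, and proving it propagates). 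The main obstacle is exactly this propagation argument: ensuring the replacement forces $p \notin U$ for every competitor $U$, which I anticipate handling by picking the replacement vertex to be a neighbor $q$ of $p$ such that $q$ itself is not a leaf (possible because $T$ is not a star, after possibly iterating toward the interior), and then using that $q \in U$ forces, via $q$'s own covered edges, enough structure to recover $\tilde{U} \subseteq U$. I would finish by noting the size is preserved ($|\tilde{U}'| = |\tilde{U}|$) and that the number of leaves in the pre-assignment strictly decreases, so finitely many iterations give the claim.
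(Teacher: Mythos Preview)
Your exchange argument hits a real obstruction, and the obstacle you flag in the ``$p\in U$'' branch is not merely technical: the lemma as stated is false. Consider the path $P_4$ on vertices $a,b,c,d$ with edges $\{a,b\},\{b,c\},\{c,d\}$; it is not a star, and its minimum vertex covers are $\{a,c\}$, $\{b,c\}$, and $\{b,d\}$. The only feasible \textsc{Include} pre-assignments of size~$1$ are $\{a\}$ and $\{d\}$, each a single leaf; the non-leaf singletons $\{b\}$ and $\{c\}$ each lie in two minimum vertex covers and are therefore infeasible. Hence every minimum feasible pre-assignment for $P_4$ contains a leaf, and no local exchange can remove it while preserving size. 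Your proposed swap, with $\tilde U=\{a\}$, $U^*=\{a,c\}$, $\ell=a$, $p=b$, forces $q=c$, and $\tilde U'=\{c\}$ is indeed not feasible --- exactly the failure you were worried about, and it cannot be repaired by ``propagating further'' because there is nothing further to propagate to.

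For comparison, the paper's proof takes a different route from yours: it replaces the leaf $u$ by its neighbor $u'$ in \emph{both} the pre-assignment and the target cover, passing from $(\tilde U,U^*)$ to $(\tilde U\cup\{u'\}\setminus\{u\},\,U^*\cup\{u'\}\setminus\{u\})$, and simply asserts that the new pre-assignment is again feasible. On $P_4$ this sends $(\{a\},\{a,c\})$ to $(\{b\},\{b,c\})$, and $\{b\}$ is not feasible either. So the paper's argument has the same unfillable gap; the uniqueness of the new target is claimed but does not hold, and the statement would need an additional hypothesis (or a weaker conclusion) to be correct.
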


\begin{proof}
We show that if $\tilde{U}$ contains a leaf vertex, we can construct another minimum feasible pre-assignment $\tilde{U}'$ for $T$ in the \textsc{Include}, 
which does not contain a leaf vertex. We first see the \textsc{Include} model. Suppose that $\tilde{U}$ contains a leaf vertex $u$ and the target vertex cover is $U^*$. Since $u\in U^*$ is a leaf, it covers only an edge $\{u,u'\}$, where $u'$ is the unique neighbor of $u$ and $u'\not\in U^*$. Then, for minimum vertex cover $U^*\cup \{u'\} \setminus \{u\}$, $\tilde{U}\cup \{u'\} \setminus \{u\}$ is a minimum feasible pre-assignment. Note that $u'$ is not a leaf, because $T$ is not a single edge. Thus, we can replace a leaf with a non-leaf vertex in $\tilde{U}$ with keeping the optimality, which shows the lemma's statement for the \textsc{Include} model. 
\end{proof}

The if-condition at line 13 is also changed to $\sum_{i}a_i = \tau(T) - 1$ so as $v$ to be included in the vertex cover.

Now we give an upper bound on the number of subtrees of $n$ vertices that may appear during the procedure in Algorithm \ref{alg:tree}. To make it easy to count subtrees, we fix a root; the number of subtrees is upper bounded by $n$ times an upper bound of the number of rooted subtrees.  
Furthermore, we introduce the notion of isomorphism with a root to reduce multiple counting. 
\begin{definition}\label{def:isomorphic:tree}
Let $T^{(1)}=(V^{(1)},E^{(1)},r^{(1)})$ and $T^{(2)}=(V^{(2)},E^{(2)},r^{(2)})$ be trees rooted at $r^{(1)}$ and $r^{(2)}$, respectively. Then,  
$T^{(1)}$ and $T^{(2)}$ are called \emph{isomorphic with respect to root} if for any pair of $u,v\in V^{(1)}$ there is a bijection $f: V^{(1)}\to V^{(2)}$ such that $\{u,v\}\in E^{(1)}$ if and only if $\{f(u),f(v)\}\in E^{(2)}$ and $f(r^{(1)})=f(r^{(2)})$. 
\end{definition}
For $T=(V,E)$ rooted at $r$, a connected subtree $T'$ rooted at $r$ is called a \emph{rooted I-subtree} of $T$, if $T'$ is $T$ itself, or there exists a non-leaf $v$ such that $T'$ is the connected component with root $r$ of $T - \{v\}$. Note that the graph consisting of only vertex $r$ can be a rooted I-subtree.

\begin{lemma}\label{lem:treeIn}
Any tree rooted at $r$ has $O^*(2^{n/2})~(=O(1.4143^n))$ non-isomorphic rooted I-subtrees rooted at $r$, where $n$ is the number of the vertices.
\end{lemma}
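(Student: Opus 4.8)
The plan is to describe rooted I‑subtrees recursively and then bound their number by a short induction. First I would root $T$ at $r$ and record that every rooted I‑subtree arises by iteratively deleting a non‑leaf vertex and keeping the component containing $r$; equivalently it is $T$ with a union of vertex‑disjoint subtrees deleted, each rooted at a vertex of some antichain of non‑leaves, while in each surviving branch one independently keeps a rooted I‑subtree of that branch. Writing $i(S)$ for the number of non‑isomorphic rooted I‑subtrees of a rooted tree $S$, and grouping the children of $r$ into isomorphism types $t$ of multiplicity $m_t$ and size $\hat n_t$ (with representative $\hat T_t$), this yields
\[
   i(T)\ \le\ \prod_{t}\binom{a_t+m_t-1}{m_t},\qquad a_t:=i(\hat T_t)+[\hat n_t\ge 2],
\]
where $[\hat n_t\ge2]$ is $1$ if $\hat n_t\ge2$ and $0$ otherwise (it records the extra option of deleting a whole branch, permissible exactly when the branch root is a non‑leaf of $T$). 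The multiset coefficient appears because branches of the same type are interchangeable; the bound may be strict because I‑subtrees of differently‑typed branches can be isomorphic, but that only helps.

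I would then prove, by strong induction on $n=|V(T)|$, that $i(T)\le\tfrac34\,2^{n/2}$. For $n\le4$ one checks directly that $\max i(T)$ equals $1,1,2,3$ (attained at $K_1,K_2,P_3,P_4$), each at most $\tfrac34\,2^{n/2}$. For $n\ge5$, since $\sum_t m_t\hat n_t=n-1$ we have $i(T)\le\bigl(\prod_t\gamma_t\bigr)2^{(n-1)/2}$ with $\gamma_t:=\binom{a_t+m_t-1}{m_t}/2^{m_t\hat n_t/2}$. The induction hypothesis gives $a_t\le1$ for $\hat n_t=1$, $a_t\le2$ for $\hat n_t=2$, $a_t\le3$ for $\hat n_t=3$, and $a_t\le\tfrac34\,2^{\hat n_t/2}+1\le2^{\hat n_t/2}$ for $\hat n_t\ge4$. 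Using $\binom{a+m-1}{m}\le a^m$ when $\hat n_t\ge4$ and evaluating the binomial directly when $\hat n_t\le3$, a routine case check gives $\gamma_t\le1$ in every case, the sole exception being a branch isomorphic to $P_3$ (so $a_t=3$) with multiplicity $m_t=1$, for which $\gamma_t=3/(2\sqrt2)$. As $P_3$ is a single isomorphism type, at most one factor exceeds $1$, so $\prod_t\gamma_t\le3/(2\sqrt2)$ and hence $i(T)\le\tfrac{3}{2\sqrt2}\,2^{(n-1)/2}=\tfrac34\,2^{n/2}$, closing the induction. Finally $\tfrac34\,2^{n/2}<(\sqrt2)^{\,n}<(1.4143)^{\,n}$ since $\sqrt2=1.41421\ldots<1.4143$, so $T$ has $O(1.4143^{\,n})=O^*(2^{n/2})$ non‑isomorphic rooted I‑subtrees.

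The step I expect to be the main obstacle is the passage to isomorphism classes, which is unavoidable: counting labelled rooted I‑subtrees does not suffice, since e.g.\ a root with $k$ pendant copies of $P_3$ has $3^{k}=3^{(n-1)/3}\approx1.442^{\,n}$ of them although only $O(n^2)$ are pairwise non‑isomorphic, so the interchangeability of equal branches must be exploited and the recursion must be set up at the level of iso‑types. A secondary subtlety is that the induction is tight at $P_4$, which is what forces the constant $\tfrac34$ in the hypothesis; with that constant, the lone surplus factor $3/(2\sqrt2)$ contributed by a $P_3$ branch is exactly what the slack $2^{n/2}/2^{(n-1)/2}=\sqrt2$ absorbs, and all the remaining verifications (the $\hat n_t\le3$ cases and the bound $a_t\le2^{\hat n_t/2}$ for $\hat n_t\ge4$) are elementary.
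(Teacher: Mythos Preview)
Your argument is correct. The recursive description of rooted I-subtrees via antichains of non-leaves is the right characterization, the multiset recursion $i(T)\le\prod_t\binom{a_t+m_t-1}{m_t}$ with $a_t=i(\hat T_t)+[\hat n_t\ge 2]$ is sound, and the induction with constant $\tfrac34$ closes exactly as you say: the only offending factor is a single Type~A $P_3$ branch with multiplicity one, contributing $3/(2\sqrt2)$, which is precisely absorbed by the $\sqrt2$ slack from the root. Your remark that one must work with isomorphism classes (the $3^{(n-1)/3}$ example) is well taken and is indeed what forces the multiset coefficients.

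Your route, however, differs from the paper's. The paper does not carry out a fresh induction; instead it observes that the base values $R(1)=1$, $R(2)=1$, $R(4)=3$ and the two $n=3$ counts ($2$ for the end-rooted path, $1$ for the star) coincide with those of the ``AK-rooted subtrees'' analysed in \cite[Lemma~5]{yoshiwatari2022winnerc}, and then invokes that lemma's induction verbatim to obtain the slightly sharper bound $R(n)\le 2^{n/2}-1$ for $n\ge4$. What the paper's approach buys is brevity (granted access to the cited reference) and a marginally better constant; what your approach buys is a self-contained argument that does not depend on an external source and makes the role of the critical $P_3$ branch and the tightness at $P_4$ completely explicit. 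Either yields the claimed $O^*(2^{n/2})$.
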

\begin{proof}
Let $R(n)$ be the maximum number of non-isomorphic rooted I-subtrees of any tree rooted at some $r$ with $n$ vertices. We claim that $R(n)\le 2^{n/2}-1$ for all $n\geq 4$, which proves the lemma. 
In a different context, Yoshiwatari et al. show that the same inequality holds for the number of subtrees satisfying a certain property (called AK-rooted subtrees)~\cite[Lemma 5]{yoshiwatari2022winnerc}.  
Note that the notions of rooted I-subtree and AK-rooted subtree are different; a rooted I-subree is obtained by removing vertices, while an AK-rooted subtree is obtained by removing edges. 


To prove the current lemma, we reuse the proof of \cite[Lemma 5]{yoshiwatari2022winnerc}. 
A key is that for small cases ($n\le 4$), the numbers of non-isomorphic rooted I-subtrees are the same as 
non-isomorphic AK-rooted subtrees, which is a building block of the argument. Concretely, what we show here is (1) $R(1)=1, R(2)=1, R(4)=3$, and (2) for $n=3$ in Figure \ref{tree1}, the number of subtrees of Type A is $2$ and that of Type B is $1$; if these hold, exactly the same argument can be applied and we have  $R(n)\le 2^{n/2}-1$ for all $n\geq 4$. 

We first see (1). 
Since $n=1$ is just the root itself, $R(1)=1$. For $n=2$, a tree $T$ with $2$ vertices is a single edge, and a rooted I-subtree of $T$ containing $r$ is $T$ itself; $R(2)=1$. For $n=4$, Figure \ref{tree2} lists all candidates of $T$, and the numbers of the rooted I-subtrees are 3, 2, 2, 1, respectively; we have $R(4)=3$. 
We next see (2). For Type A in Figure \ref{tree1}, the rooted I-subtrees are the tree itself and isolated $r$, and for Type B, a rooted I-subtree is only the tree itself. Since (1) and (2) hold, we can prove this lemma by following the argument in the proof of \cite[Lemma 5]{yoshiwatari2022winnerc}.   
\begin{figure}[t]
\begin{minipage}{0.48\hsize}
 \centering
 \includegraphics[width=0.55\linewidth]{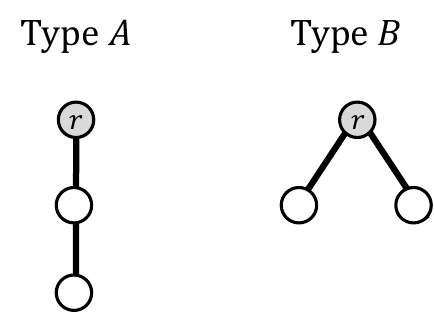}
 \caption{Trees with 3 vertices rooted at $r$}
 \label{tree1}
\end{minipage}
\begin{minipage}{0.48\hsize}
 \centering
 \includegraphics[width=1.05\linewidth]{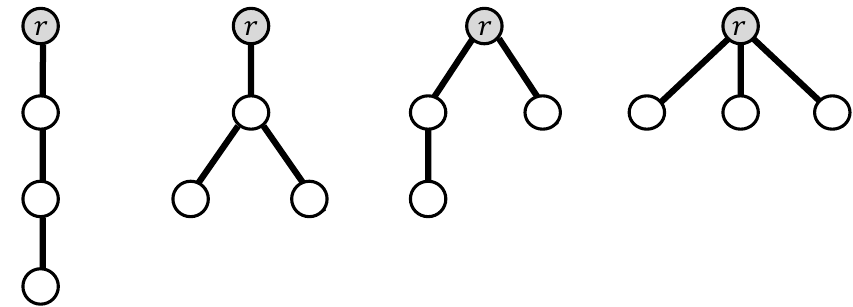}
 \caption{Trees with 4 vertices rooted at $r$}
 \label{tree2}
\end{minipage}
\end{figure}
\end{proof}

Lemma \ref{lem:treeIn} leads to the running time for \textsc{Include} model. By applying a different but similar argument, we achieve the same running time for \textsc{Exclude} model 
\begin{theorem}
{\rm PAU-VC} for trees under any model can be solved in time $O^*(2^{n/2})=O(1.4143^n)$.
\end{theorem}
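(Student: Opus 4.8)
The plan is to instantiate \Cref{alg:tree} for the \textsc{Include} and \textsc{Exclude} models (lines~10 and~13 being the model-specific parts) and to bound its running time by the number of distinct subtrees on which it recurses; by \Cref{thm:mix-and-exclude} the \textsc{Mixed} model is subsumed by \textsc{Exclude}. For \textsc{Include}, almost everything is already in hand. Correctness: the base cases are settled by the star lemma; in the recursive case, pre-assigning one vertex $v$ to the cover turns the instance, via \Cref{obs:reduction}, into the disjoint union of the components of $T-\{v\}$ with the budget raised by one, so the recurrence is sound; and \Cref{lem:star} certifies that restricting $v$ to non-leaves loses no optimal pre-assignment. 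Every call does only polynomial work, since a minimum vertex cover of a tree is computable in polynomial time and testing uniqueness of the minimum vertex cover of the reduced graph is polynomial by \Cref{cor:unique_alg}. Hence, using memoization, the running time is polynomial times the number of subtrees that appear; fixing an arbitrary root $r$ of $T$ and identifying subtrees up to isomorphism with respect to root, each such subtree is (a re-rooting of) a rooted I-subtree, so \Cref{lem:treeIn}, together with the factor-$n$ loss recorded just before it for fixing a root, bounds this number by $O^*(2^{n/2})$. This settles the \textsc{Include} bound.

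For the \textsc{Exclude} (hence \textsc{Mixed}) model I would run the same algorithm with line~10 instantiated so that pre-assigning $v$ deletes $\{v\}\cup N_T(v)$ --- excluding $v$ forces $N_T(v)$ into every minimum vertex cover and makes $v$ isolated --- and with line~13 adjusted accordingly, invoking the analogue of \Cref{lem:star} (or, if no leaf-avoidance holds, keeping the generic choice ``for all $v$'' and absorbing it into the count). Define a \emph{rooted E-subtree} of $T$ rooted at $r$ to be $T$ itself, the empty graph, or the component containing $r$ of $T-(\{v\}\cup N_T(v))$ over admissible $v$. The crux is the counting statement: every tree has $O^*(2^{n/2})$ non-isomorphic rooted E-subtrees. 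One route observes that for $v\neq r$ with parent $p$ relative to $r$, the component of $r$ in $T-(\{v\}\cup N_T(v))$ is exactly the component of $r$ in $T-\{p\}$ (every removed vertex other than $p$ sits in the subtree hanging below $p$); this component is either a rooted I-subtree, when $p$ is a non-leaf, or has at most one vertex in the remaining cases, which the base cases dispose of. Thus the root-component of every exclude-step is covered by \Cref{lem:treeIn}; for the non-root components one checks that an exclude-step from a tree $W$ is realized by deleting from $W$ first the single vertex $v$ and then one further single vertex in each resulting piece, so the whole Exclude recursion explores no more subtrees (up to isomorphism with respect to root) than the Include recursion, and the same $O^*(2^{n/2})$ bound applies. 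A more robust route simply re-runs the inductive counting argument of Yoshiwatari et al.~\cite{yoshiwatari2022winnerc} for rooted E-subtrees, verifying that the small cases ($n\le 4$, and the two $n=3$ types) have the same counts as in the proof of \Cref{lem:treeIn}, so the same recursion yields $R_E(n)\le 2^{n/2}$.

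Combining, \Cref{alg:tree} under each model visits, with memoization, $O^*(2^{n/2})$ distinct subtrees (up to isomorphism with respect to root), spends polynomial time per subtree, and returns a minimum feasible pre-assignment; hence \pauvc for trees is solvable in $O^*(2^{n/2})=O(1.4143^n)$ time under all of \textsc{Include}, \textsc{Exclude}, and \textsc{Mixed}. I expect the main obstacle to be the \textsc{Exclude} counting statement: showing that replacing single-vertex deletions by closed-neighborhood deletions does not increase the number of distinct subtrees. The first route reduces the root-component part to \Cref{lem:treeIn} cleanly, but needs care with the boundary cases $v=r$ and $p=r$ and with confirming that the simulation of an exclude-step by two single-vertex deletions respects whatever candidate-set restriction the algorithm uses (non-leaf versus arbitrary); the robust route needs a careful re-examination of the base of the Yoshiwatari-style induction for the new subtree notion. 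Checking that each recursive call runs in polynomial time is routine, given \Cref{cor:unique_alg} and the polynomial-time solvability of minimum vertex cover on trees.
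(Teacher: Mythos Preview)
Your proposal is correct and takes essentially the same approach as the paper. The \textsc{Include} case matches exactly (instantiating \Cref{alg:tree} with the non-leaf restriction justified by \Cref{lem:star}, memoizing, and bounding the number of subtrees via \Cref{lem:treeIn}); for \textsc{Exclude} the paper itself gives no details beyond ``by applying a different but similar argument, we achieve the same running time,'' and your two routes---reducing each closed-neighborhood deletion to the single-vertex deletion at the parent, or rerunning the Yoshiwatari-style induction for the new subtree notion---are precisely the kind of instantiation the paper leaves implicit.
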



\bibliography{arxiv}
\end{document}